\newcommand{\bra}[1]{%
{\left\langle #1\right\vert}}
\newcommand{\ket}[1]{%
{\left\vert #1\right\rangle}}
\newcommand{\braket}[2]{%
{\left\langle #1  \vert  #2\right\rangle}}
\newcommand{\bracket}[3]{%
{\left\langle #1  \vert {#2} \vert #3 \right\rangle}}
\newcommand{\set}[1]{%
{\{ #1 \}}}
\newcommand{\abs}[1]{%
{\vert #1 \vert}}
\newcommand{\absq}[1]{%
{\vert #1 \vert^2}}
\newcommand{\proj}[1]{%
{\vert #1 \left\rangle \right\langle #1 \vert}}
\newtheorem{theorem}{Theorem}
\theoremstyle{plain}
\begin{document}
%
\title{SU(2) COHERENT STATE PATH INTEGRALS 
LABELED BY A FULL SET OF EULER ANGLES: 
BASIC FORMULATION}
\author{MASAO MATSUMOTO
\footnote{Class Preparation Office, 
Promotion of General Education and 
Liberal Arts Division, Student Affairs Department, 
Kyoto University, 
Yoshida Nihonmatsu-cho,
Sakyo-ku, Kyoto 606-8501, Japan
}\ \ 
\footnote{Present Address: 1-12-32 Kuzuha Asahi, Hirakara, 
Osaka 573-1111, Japan}
\\
{\em matumoto@i.h.kyoto-u.ac.jp}
\footnote
{Also available: masao-matsumoto@m-pine-tree.sakura.ne.jp }
}
%
%
\maketitle
\begin{abstract}
\noindent
We develop a basic formulation of the spin (SU(2)) coherent state 
path integrals based not on the conventional highest or 
lowest weight vectors but on arbitrary fiducial vectors. 
The coherent states, being defined on a 3-sphere, 
are specified by a full set of Euler angles. 
They are generally considered as states without classical analogues. 
The overcompleteness relation holds for the 
states, by which we obtain the time evolution of general systems 
in terms of the path integral representation; 
the resultant Lagrangian in the action has a monopole-type term 
{\em \`a la} Balachandran {\em etal.} as well as some additional terms, 
both of which depend on fiuducial vectors in a simple way. 
The process of the discrete path integrals to the continuous ones 
is clarified. 
Complex variable forms of the states and path integrals are also obtained. 
During the course of all steps, we emphasize the analogies and 
correspondences to the general canonical coherent states and  path integrals 
that we proposed some time ago. 
In this paper we concentrate on the basic formulation.
The physical applications as well as 
criteria in choosing fiducial vectors for real Lagrangians, 
in relation to fictitious monopoles and geometric phases, 
will be treated in subsequent papers separately. 
\end{abstract}
{\em Keywords}: Coherent state; path integral; fiducial vector; monopole; geometric phase; displaced number state; 
rotated spin number state; nonclassical quantum state.\\

\noindent
PACS numbers: 03.65.Ca, 03.65.Vf, 14.80.Hv, 42.50.-p, 42.50.Dv
\renewcommand{\thefootnote}{\noindent \alph{footnote}}
\section{Introduction}
\label{sec:intro}
It has been approximately four decades since 
the ``coherent state (CS)'' for the Heisenberg-Weyl group, 
i.e. the ``canonical CS (CCS)'' was extended to wider classes \cite{Rad,Pera,Arec,BG}. 
During the period, the broader CS, together with the original one, 
have had a great influence on almost every branch of modern physics 
\cite{KlaSk,Perb,Perc,IKG,FKS}. 
\par
Since basic properties of CS are that they are continuous functions 
labeled by some parameters and that they compose 
``overcomplete'' sets \cite{KlaSk}, 
they provide a natural way to perform path integrations. 
Such ``coherent state path integrals (CSPI)”, 
i.e. path integrals (PI) via CS, have highly enriched the methods 
of PI with their physical applications 
\cite{KlaSk,IKG}. 
(In what follows each of the words ``CS'' and ``CSPI'' is used as a plural 
as well as a singular.)
\par 
As stated at the beginning, among all the CS 
the CCS is the original and the best-known CS that was
introduecd by Schr$\ddot{\rm o}$dinger 
\cite{Schroed}. 
The CCS is, in the light of quantum optics, generated by displacing, 
or driving, the vacuum, i.e. the zero photon state
 \cite{Glau}. 
From the viewpoint of CS in terms of unitary irreducible 
representations of Lie groups {\em \`a la} Perelomov 
 \cite{Pera,Perb,Perc}, 
the unitary operator is a displacing operator 
and a ``fiducial vector (FV)''
\footnote{
In \cite{NECSPI} and \cite{spinCSPI} we adopted the term ``starting vector'' 
which can be found in, e.g. p~14 of \cite{Perb}. 
The term seems well fit for the situation. 
However, we use ``fiducial vector'' in the present paper 
since it appears to be more employed in literature. 
See e.g. \cite{KlaSk}. 
We use also the word ``FV'' as a plural as well as a singular. 
} 
is the ground state or vacuum. 
\par
Some time ago we opened up the CSPI in terms of CCS evolving from an
arbitrary FV and investigated the associated geometric phases with an
application to quantum optics 
\cite{NECSPI}. 
Let us look back the results from the physical viewpoint 
concisely: 
First, we {\em set the generic CCS by displacing, or driving, 
not a usual vacuum, but an arbitrary superposition of photon number states}. 
If we take a single photon number state as a FV, we find that the CCS 
reduce to ``displaced number states (DNS)'' 
\cite{DNS}. 
So we may state that {\em the CCS with a generic FV is an arbitrary 
superposition of the DNS}. 
Second, using the general CCS we have performed CSPI which give a
completely general propagator including geometric phase terms 
corresponding to a quantum optical state that has no classical analogue; 
{\em The resultant action in the Lagrangian includes extra terms 
reflecting the entanglements between the coefficients of FV}. 
Third, particularly we investigated the geometric phase for DNS; 
And we found that the condition for the experimental detection of the phase 
had changed according to the $n$-dependence of DNS. 
Such CCS with a general FV may propose a clue to a universal language
for quantum optics especially when combined with SU(1,1) case.
\footnote
{``Displaced squeezed number states'' have been discussed 
in literature; See \cite{DNS} and references there in.} 
\par
Now, we realize that we have another CS which is of 
practical importance in a large variety of physical systems: 
That is the spin or SU(2) CS. 
It is one of the extensions of CS addressed at the beginning.
\footnote
{These two CS are intriguing also from the mathematical viewpoint 
since each of them is a typical example of CS for 
nilpotent Lie groups and semisimple ones respectively 
\cite{Pera}.
} 
Hence, following the achievements for the CCS above mentioned \cite{NECSPI}, 
we here endeavor to liberate spin CS from the conventional choice of FV,
$\ket{\Psi_0} = \ket{s, s}$ or $\ket{s, -s}$, 
and perform the path integration via the CS based on arbitrary FV; 
The SU(2) CS is, in this case, labeled by a full set of three Euler angles 
$(\phi, \theta, \psi)$; It is defined on a 3-sphere $S^3$. 
\par
We have several concrete reasons for doing such an extension. 
First, we know that DNS turned out to be non-classical quantum states 
that had various interesting properties 
 \cite{DNS}. 
So the analogous general spin CS may serve new non-classical quantum
states that have not been known. 
And the PI will provide examples 
of propagators for such states. 
Second, the usual CCS have been employed as logical gates in 
quantum computation (QC) 
\cite{Milburn}. 
And  moreover, the superpositions of DNS, which fall within the CCS 
with generic FV before mentioned 
\cite{NECSPI}, 
have already appeared in the context of QC 
 \cite{AAM-C}. 
Since spin systems as well as optical systems are probable candidates 
for exemplifying QC, spin CS with general FV may be also used in QC. 
Third, describing geometric phases, which has been one of the crucial topics 
in fundamental physics \cite{SW} in terms of CSPI requires the extension. 
Let us put it more concretely: 
Once elsewhere we investigated the geometric phases 
for a spin-$s$ particle under a magnetic field 
in the formalism of SU(2)CSPI with the conventional 
FV, i.e. $\ket{s, -s}$ \cite{KMa}. 
In consequence the results give the geometric phase 
of a monopole-type that corresponds merely 
to the adiabatic phase for the lowest eigenstate. 
However, it has been known that in the adiabatic phase 
for the same physical setting the strength of a fictitious monopole 
is proportional to the quantum number 
$m$ ($ m = -s, - s + 1, \cdots, s$) of the adiabatic state 
 \cite{Berry}. 
We cannot treat the corresponding case by the conventional SU(2)CSPI. 
Therefore the usual SU(2)CSPI is clearly unsatisfactory; 
And we had better let CS and CSPI prepare room also 
for the general cases which are reduced to any $m$th eigenstate 
in the adiabatic limit. 
Thus physics actually needs some extension of spin CSPI. 
Fourth, now geometric phases have been employed in QC 
 \cite{Sjoq}; 
And, as stated in the third reason, geometric phases are 
closely related to CSPI. 
Appreciating both areas it seems that we had better 
prepare wider CSPI and FV also for QC. 
Fifth, apart from the third reason, such a formalism 
of spin CSPI involving arbitrary FV 
may consequently shed a new light in understanding monopoles themselves 
in turn. 
For we have already known that the conventional spin CSPI provide a
mathematical description of monopoles naturally 
\cite{KMa}. 
And the description is common to real and fictitious monopoles. 
Hence it is possible that considering general FV in spin CSPI helps us 
to understand monopoles deeper.  
Finally, since the usual spin CS tends to the usual CCS 
in the high spin limit \cite{Rad}, 
we are naturally led to seek the spin CS and CSPI that are contracted 
to the CCS and CSPI with arbitrary FV described in 
\cite{NECSPI}. 
For the above several reasons we take a general FV in this paper. 
\par
We may grasp the main results by three theorems. 
The first one shows that we have the overcompleteness 
relation, or the resolution of unity, for the spin CS with 
an arbitrary FV. 
Using the result we obtain the second one as follows: 
The form of the generic Lagrangian for the SU(2)CSPI is 
(\ref{eqn:Lagpol}). 
As the Lagrangian for the usual CSPI, it consists of two parts: 
The topological term related to geometric phases 
and the dynamical one originating from a Hamiltonian. 
However, the contents are quite different; 
In the present case the former has a monopole-type term 
{\em \`a la} Balachandran {\em etal.} (hereafter called BMS${}^2$) \cite{Bal}, 
whose strength 
or charge is proportional to the expectation value of the quantum number 
$m$ in the state of a FV, 
$\ket{\Psi_0}$, having $(2s + 1)$-components; 
And besides the topological term contains additional terms 
that reflect the effect of interweaving components 
of a FV with their next ones. 
This is also the case for the latter; 
Such interweaving components of a FV appear in the dynamical term as well. 
In the previous version we merely showed the above results 
from the formal CSPI 
 \cite{spinPIGP}. 
It can be, however, established from the discrete CSPI. 
In the third theorem we prove that the 
general spin CS and CSPI 
contract to the general CCS and CCSPI 
in \cite{NECSPI}. 
\par
The plan of the paper is the following. 
Before going into the spin CS case, we concisely review the CCS and PI
 evolving from an arbitrary FV \cite{NECSPI} in $\S$ \ref{sec:CCSPI}. 
Next, we describe the spin CS based on arbitrary FV 
as well as their various properties ($\S$ \ref{sec:CS}) and 
employ them to perform path integration ($\S$ \ref{sec:PI}). 
Specifically, there we investigate the process of going 
from the discrete PI to the continuous ones. 
Next, we discuss problems related to the Lagrangians: 
the problems of topological terms, the fictitious gauge potentials and 
semiclassical equations. 
Complex variable form of the CS and CSPI are obtained 
in $\S$ \ref{sec:c-parametCS}. 
The results are applied to the demonstration that 
the spin CSPI there contract to the CCSPI in 
$\S$ \ref{sec:CCSPI}. 
Section \ref{sec:summary} gives the summary and prospects. 
Mathematical tools necessary in the article are enumerated concisely 
in Appendix A.
\section{General Canonical Coherent States 
and the Path Integrals}
\label{sec:CCSPI}
In this section we briefly revisit the results of the CCS with a generic FV 
and the related PI 
described in \cite{NECSPI}. 
We may compare the expressions in this section with those in the
following $\S$ \ref{sec:CS} - $\S$ \ref{sec:c-parametCS}. 
Notice that $\alpha$, in this article, denotes a parameter specifying the CCS; 
not an element of Euler angles. 
\subsection{General CCS}
\label{sec:gCCS}
We proceed physically as far as possible. 
\subsubsection{Definition of the CCS}
First, we {\em set the generic CCS, $\ket{\alpha}$, 
by displacing, or driving, 
not a usual vacuum, i.e. the zero photon
state, but an arbitrary superposition of photon number states}: 
\begin{equation}
\ket{\alpha} = {\hat D} (\alpha) \ket{\Psi_0},
\label{eqn:CCS1}
\end{equation}
where
\begin{equation}
{\hat D}(\alpha) \equiv \exp(\alpha {\hat a}^+ - \alpha^* {\hat a}) 
= \exp\Bigl( - (1 / 2) \absq{\alpha} \Bigr)
\exp(\alpha {\hat a}^+) 
\exp(- \alpha^* {\hat a})
\label{eqn:CCS-D}
\end{equation}
and 
\begin{equation}
\ket{\Psi_0} = \sum_{n=0}^{\infty} c_{n} \ket{n} 
\qquad \text{with} \qquad \sum_{n=0}^{\infty} 
\absq{c_n} = 1. 
\label{eqn:CCS-FV}
\end{equation}
Here $\ket{n}$ is the photon number state.
From (\ref{eqn:CCS1})-(\ref{eqn:CCS-FV}), 
the general CCS, $\ket{\alpha}$, can be put into the form:
\begin{equation}
\ket{\alpha} \equiv  \sum_n^\infty c_n 
\ket{\alpha, n}
\label{eqn:CCS2}
\end{equation}
with
\begin{eqnarray}
\ket{\alpha, n} & \equiv & 
{\hat D}(\alpha) \ket{n} 
 = \sum_{m=0}^{\infty} 
\bracket{m}{{\hat D} (\alpha)}
{n} \ket{m}
\nonumber\\
& = & \exp \Bigl( - (1 / 2) \absq{\alpha} 
\Bigr) \ 
\Bigl[\ \sum_{m=0}^{n}
\ \left( \frac{m!}{n!}  \right)^{1/2}
\ (- \alpha^{*})^{n-m}
\ L_{m}^{(n-m)}(\absq{\alpha}) 
\ket{m}
\nonumber\\
& \qquad  & 
+ \sum_{m = n+1}^{\infty}
\left( \frac{n!}{m!} \right)^{1/2}
\ \alpha^{m-n}\  
\ L_{n}^{(m-n)}(\vert \alpha \vert^2) \ket{m} 
\ \Bigr], 
\label{eqn:DNS}
\end{eqnarray}
 where $L_k^{(l)}(x)$ is the Laguerre polynomials. 
\par
We see that $\ket{\alpha, n}$ in (\ref{eqn:DNS}) 
is a DNS.
\footnote
{We made no mention of DNS in \cite{NECSPI}; 
Since our central concern was CSPI, we were not aware of it. 
We appreciate those who contribute to DNS 
including \cite{DNS} and \cite{AAM-C}.
} So we may say that {\em CCS with a general FV 
is an arbitrary superposition of DNS}. 
However, we will not use the explicit form of DNS in the present paper.
\subsubsection{Resolution of unity}
For CCS $\ket{\alpha}$ evolving from an 
{\em arbitrary FV}, we have the ``overcompleteness relation'' 
or ``resolution of unity'': 
\begin{equation}
\frac{1}{\pi} \int 
\ket{\alpha} d^2 \alpha \bra{\alpha} = {\bf 1} 
\quad 
{\rm with} 
\quad
d^2 \alpha \equiv d({\rm Re} \,  \alpha)\, d({\rm Im} \, \alpha).
\label{eqn:CCSresolution} 
\end{equation}
\subsubsection{CCS as eigenvectors}
It turns out that the CCS $\ket{\alpha}$ is a 
``generalized eigenvector'' \cite{Sat}
of an annihilation operator: 
\begin{equation}
({\hat a} - \alpha)^{N + 1} \ket{\alpha} = 0 
\qquad (N = \max n).
\label{eqn:CCSev1}
\end{equation}
In the case of DNS $\ket{\alpha, n}$ we also have:
\begin{equation}
(\hat a^+ -  {\alpha}^* )(\hat a - \alpha) \left\vert \alpha, n \right\rangle
= n \left\vert \alpha, n \right\rangle.
\label{eqn:CCSev2}
\end{equation}
Concerning (\ref{eqn:CCSev2}), we apologize for sign errors 
in the original expressions in Eqs. (20) and (B1) in \cite{NECSPI}. 
\subsection{General CCS path integrals}
\label{sec:genCCSPI}
\subsubsection{CCS path integrals}
\label{sec:CCSPI-form}
Invoking the resolution of unity for CCS 
(\ref{eqn:CCSresolution}), we can obtain 
PI expression for the CCS. 
We put the results below.
\par
Let us define
\begin{equation}
A({\dot \alpha}, {\dot \alpha}^{*}; \{ c_{n} \})
\equiv 2 \sum_{n=1}^{\infty} 
n^{1/2} \Bigl( 
{\dot \alpha} \, c_{n}^{*} c_{n-1} 
- {\dot \alpha}^{*} \, 
c_{n} c_{n-1}^{*} \Bigr)
\label{eqn:defA-CCS}
\end{equation}
and
\begin{equation}
H(\alpha^{*}, \alpha, t) 
\equiv \bra{\alpha}{\hat H}\ket{\alpha}.
\label{eqn:defH-CCS} 
\end{equation}
Then we find the propagator 
$K(\alpha_f, t_f; \alpha_i, t_i)$
 which starts from  $\ket{\alpha_i }$ at 
$t = t_i$ , evolves under the effect of the Hamiltonian 
$\hat H(\hat a^{+}, \hat a; t)$ which is assumed to be a suitably-ordered 
function of $\hat a^{+}$ and $\hat a$, and ends up with $\ket{\alpha_f}$ 
at $t = t_f$ is:
\begin{equation} 
K(\alpha_f, t_f; \alpha_i, t_i)
=\int {\cal D} [\alpha (t)] \  
\exp \{ (i / \hbar) S[\alpha(t)] \}
\label{eqn:propagator-CCSPI}, 
\end{equation}
where  
\begin{equation}
S[\alpha(t)] \equiv \int_{t_i}^{t_f} L dt 
\label{eqn:action-CCSPI}
\end{equation}
with 
\begin{equation}
L  \equiv 
 \frac{i \hbar}{2} 
\Bigl[ ( \alpha^{*} {\dot \alpha} - {\dot \alpha}^{*} \alpha)
+ A({\dot \alpha}, {\dot \alpha}^{*}; \{ c_{n} \}) \Bigr] 
- H(\alpha^{*}, \alpha, t) 
\label{eqn:LagCCSPI}
\end{equation}
and we symbolized
\begin{equation} 
{\cal D}[\alpha(t)] 
\equiv
\lim_{N \rightarrow \infty} 
\left( \frac{1}{\pi} \right)^N 
\prod_{j=1}^{N}  d^2 \alpha_j . 
\label{eqn:paths-CCSPI}
\end{equation}
As we see in (\ref{eqn:defA-CCS}), the extra $A$-term 
in the Lagrangian (\ref{eqn:LagCCSPI}) represents 
the entanglements of the coefficients of FV 
with their next ones. 
However, even if the $A$-term vanishes, since we take a general FV, 
$H(\alpha^{*}, \alpha, t)$ in (\ref{eqn:defH-CCS}) 
is also different from that
for the usual CCS as we showed in the evaluation of 
geometric phases for DNS in \cite{NECSPI}.
\subsubsection{Canonical equations}
\label{sec:CCSCE}
In the semiclassical limit, i.e.  $\hbar \rightarrow 0$, 
the Lagrangian (\ref{eqn:LagCCSPI}) 
yields the Euler-Lagrange equation:
\begin{equation}
i \hbar \  {\dot \alpha} 
=\ \frac{\partial H}{\partial \alpha^{*}},
\  
- i \hbar \  {\dot \alpha^{*}} 
=\ \frac{\partial H}{\partial \alpha}, 
\label{eqn:CCSvareq} 
\end{equation}
which is the generalized canonical equations. 
Since the $A$-term (\ref{eqn:defA-CCS}) is expressed as a total derivative, 
it is not involved in (\ref{eqn:CCSvareq}); 
And thus Eq. (\ref{eqn:CCSvareq}) is the same as that 
for the usual CCS formally. However, as mentioned above, 
the meaning of $\alpha$ and $H(\alpha^{*}, \alpha, t)$ 
are different. 
%
\section{SU(2) Coherent State with General Fiducial Vectors}
\label{sec:CS}
In this section we investigate the explicit form of the 
SU(2) CS based on arbitrary FV. 
And their properties are studied to such extent as we need later. 
It means that we will consider the spin states analogous to the CCS with
a generic FV in $\S$ \ref{sec:gCCS}. 
The results in $\S$ \ref{sec:CS} - $\S$ \ref{sec:c-parametCS} include 
those for the conventional SU(2)CS 
\cite{Rad,Arec,Perb} and their CSPI 
 \cite{Klau,KUS}; 
The latter follow from the former 
when we put $c_{s}=1$ and $c_m=0\ (m \ne s)$, 
or $c_{-s}=1$ and $c_m=0\ (m \ne -s)$ in later expressions. 
\subsection{Construction of the general SU(2) coherent state}
The SU(2) or spin CS are constructed from the Lie algebra satisfying 
$
{\bf S} \times {\bf S} = i \, {\bf S}, 
$ 
where ${\bf S} \equiv ({\hat S}_1, {\hat S}_2, {\hat S}_3)$ 
is a matrix vector composed of the spin operators. 
The operators in $\bf S$ are also the infinitesimal operators 
of the irreducible representation $R^{(s)}(g)$ of $SO(3)$. 
Since $SU(2) \simeq SO(3)$ locally, 
we can also use $SO(3)$ to construct the SU(2) CS.
Somewhat similar to the CCS, the SU(2)CS is defined 
by operating a rotation operator 
${\hat R}({\bf\Omega})$ with Euler angles 
${\bf\Omega} \equiv (\phi, \theta, \psi)$,
\footnote
{
Hereafter we adapt the abbreviation ${\bf\Omega} \equiv (\phi, \theta, \psi)$ 
from Radcliffe \cite{Rad} to describe a set of Euler angles 
which specifies the spin CS.
} 
which is the operator of $R^{(s)}(g)$, 
on a fixed vector $\ket{\Psi_0}$ 
in the Hilbert space of $R^{(s)}(g)$ 
\cite{Rad,Pera,Arec}:
%
\begin{equation}
\ket{\bf\Omega} \equiv \ket{\phi, \theta, \psi} 
= {\hat R}({\bf\Omega}) \ket{\Psi_0}
= \exp (- i \phi {\hat S_3}) \exp (- i \theta {\hat S_2})
 \exp (- i \psi{\hat S_3}) \ket{\Psi_0}.  
\label{eqn:CS1}                         
\end{equation}
The vector $\ket{\Psi_0}$, called a FV, 
is taken as $\ket{s, -s}$ or $\ket{s, s}$ in the conventional choice \cite{Rad,Arec}.
 
CS with such FV are closest to the classical states and have various 
useful properties 
\cite{Perb}. We appreciate them truly. 
According to the general theory of the CS, however, 
we have much wider possibilities in choosing a FV; 
And in fact it permits any normalized fixed vector in the Hilbert space 
\cite{Pera,KlaSk,Perb}. 
Thus we can take $\ket{\Psi_0}$ as
\begin{equation}
\ket{\Psi_0} = \sum_{m = - s}^{s} c_m \ket{m} 
\qquad {\rm with} \qquad
\sum_{m = - s}^{s} \absq{c_m} = 1.
\label{eqn:statvec}
\end{equation}
Hereafter $\ket{m}$ stands for $\ket{s, m}$. 
The FV will bring us all the information in later sections 
as far as the general theory goes. 
Looking at the problem in the light of physical applications, 
we need to take an appropriate $\ket{\Psi_0}$, i.e. 
$\set{c_m}$, for each system being considered. 
{\em We may consider CS with such FV as quantum states 
which have no classical analogues.} 
Exploring them surely enrich the understanding of the physical world. 
Eqs. (\ref{eqn:CS1})-(\ref{eqn:statvec}) 
correspond to (\ref{eqn:CCS1})-(\ref{eqn:CCS-FV}) 
for CCS. 
From Appendix A (ii) one can see $\ket{\bf\Omega}$ is defined on a
3-sphere $S^3$; It is specified by three real parameters, 
for which we take $\bf\Omega = (\phi, \theta, \psi)$.
Notice that the reduction of the number of Euler angles 
is not always possible for an arbitrary $\ket{\Psi_0}$; 
For any $s$, $\ket{\Psi_0}$ is not necessarily reached 
from $\ket{m}$ via ${\hat R}({\bf\Omega})$. 
Hence we use a full set of three Euler angles and proceed 
with it in what follows, which seems suitable for later discussions. 
When $\ket{\Psi_0} = \ket{m}$, we can eliminate $\psi$ from 
$\bf\Omega$, thus yielding the spin CS with the phase space 
of a 2-sphere $S^2$, the Bloch sphere, labeled by two real parameters 
$(\theta, \phi)$. 
\par
Having written $\ket{\Psi_0}$ in the form of (\ref{eqn:statvec}), 
the SU(2)CS is represented by a linear combination 
of a set of the vectors $\set{\ket{m}}$ as
\begin{equation}
\ket{\bf\Omega} 
= \sum_{m=-s}^{s} c_m \ket{{\bf\Omega}, m}
\label{eqn:CS3a}
\end{equation}
with
\begin{eqnarray} 
 \ket{{\bf\Omega}, m} 
& \equiv  & {\hat R}({\bf\Omega}) \ket{m} 
 = \sum_{m'=-s}^{s}\,  
{\sf R}_{m'{}m}^{(s)} 
({\bf\Omega})
\ \ket{m'} 
\nonumber \\
& = & \sum_{m'=-s}^{s}\, 
\exp[- i (m' \phi + m \psi)] \ {\sf r}_{m'{}m}^{(s)} 
(\theta)\ \ket{m'}. 
\label{eqn:CS3b}
\end{eqnarray}
See Appendix A (i) for the definitions of 
${\sf R}_{m'{}m}^{(s)}$ and ${\sf r}_{m'{}m}^{(s)}$. 
The form of (\ref{eqn:CS3a})-(\ref{eqn:CS3b}) is 
valuable for later arguments. 
One may see that Eqs. (\ref{eqn:CS3a})-(\ref{eqn:CS3b}) are 
analogues of Eqs. (\ref{eqn:CCS2})-(\ref{eqn:DNS}) for CCS; 
The ket $\ket{{\bf\Omega}, m}$, which corresponds to DNS $\ket{\alpha, n}$, 
may be called the ``rotated spin number state''.
\footnote
{We may, instead, call it the ``rotated magnetic quantum number state'' 
borrowing the term from spectroscopy 
\cite{Mess9}. 
However, we feel that the ``rotated spin number state'' 
sounds like a generic term and appropriate for a wide variety of spin systems.}
 We once treated $\ket{{\bf\Omega}, m}$ and its CSPI in
 \cite{PIspin}. 
\par
The state $\ket{\bf\Omega}$ may be named the ``extended spin CS'', 
yet we will call it just ``the CS'' or the ``general spin CS'' 
in this paper since there have been some arguments 
about the choice of such a FV \cite{KlaSk,Perb} 
and the CSPI 
\cite{ToKlau}.
\footnote
{
It is reviewed in \cite{Ska}. 
The authors of \cite{ToKlau} constructed ``universal propagators'' 
for various Lie group cases, being independent of the representations, 
which yields a different action from ours.
} 
We take a simple strategy for the SU(2)CSPI 
evolving from arbitrary FV here and we will give their explicit forms. 
\subsection{Resolution of unity}
\label{sec:resol}
The most important property that the CS enjoy is the 
``overcompleteness relation'' or ``resolution of unity'' 
which plays a central role in performing the path integration. 
We have the relation (\ref{eqn:CCSresolution}) for the CCS. 
In the present spin CS case, it is expressed as 
follows. 
\begin{theorem}
For $\ket{\bf\Omega}$ with arbitrary FV, we have the resolution of unity:
\begin{equation}
\int \ket{\bf\Omega} d \mu({\bf\Omega}) 
\bra{\bf\Omega} = {\bf 1}
\label{eqn:polresolution} 
\end{equation}
with
\begin{equation} 
d \mu({\bf\Omega}) 
\equiv 
\frac{2s + 1}{8\pi^2} \, d {\bf\Omega} 
\qquad 
{\rm and}
\qquad 
d {\bf\Omega}
 \equiv \sin\theta \, d\theta d\phi d\psi.
\label{eqn:polresolution2}
\end{equation}
\label{th:resUnit}
\end{theorem}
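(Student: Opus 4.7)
The plan is to reduce the identity to the Peter--Weyl orthogonality relations for the Wigner $D$-functions of $SU(2)$ (equivalently, of $SO(3)$), which are the rotation matrix elements ${\sf R}^{(s)}_{m'm}({\bf\Omega})$ appearing in (\ref{eqn:CS3b}). Since the whole construction factors through the coefficients $\{c_m\}$ of the fiducial vector, the only nontrivial ingredient beyond the conventional ($c_{\pm s}=1$) case is that the cross terms in $m\ne m'$ cancel when the Euler angles are integrated.

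First I would insert the expansion (\ref{eqn:CS3a})--(\ref{eqn:CS3b}) on both sides and write
\begin{equation}
\ket{\bf\Omega}\bra{\bf\Omega}=\sum_{m,m'=-s}^{s}\sum_{\mu,\nu=-s}^{s}
c_m c_{m'}^{*}\,{\sf R}^{(s)}_{\mu m}({\bf\Omega}){\sf R}^{(s)*}_{\nu m'}({\bf\Omega})\,\ket{\mu}\bra{\nu}.
\end{equation}
Then I would invoke the orthogonality relation
\begin{equation}
\int{\sf R}^{(s)}_{\mu m}({\bf\Omega}){\sf R}^{(s)*}_{\nu m'}({\bf\Omega})\,d{\bf\Omega}
=\frac{8\pi^2}{2s+1}\,\delta_{\mu\nu}\delta_{mm'},
\end{equation}
which is the standard Peter--Weyl formula for the unitary irreducible representations of $SU(2)$ and which I would either cite from Appendix A or derive in one line from $\int e^{-i(m-m')\phi}d\phi=2\pi\delta_{mm'}$, the analogous $\psi$-integration, and the orthogonality $\int_{0}^{\pi}{\sf r}^{(s)}_{\mu m}(\theta){\sf r}^{(s)}_{\mu m'}(\theta)\sin\theta\,d\theta=\tfrac{2}{2s+1}\delta_{mm'}$ of the small-$d$ functions.

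Combined with the measure $d\mu({\bf\Omega})=\tfrac{2s+1}{8\pi^2}d{\bf\Omega}$, this collapses the Euler-angle integral to $\delta_{\mu\nu}\delta_{mm'}$, so the triple sum in $m,m',\mu,\nu$ reduces to
\begin{equation}
\int\ket{\bf\Omega}\,d\mu({\bf\Omega})\,\bra{\bf\Omega}
=\Bigl(\sum_{m=-s}^{s}\absq{c_m}\Bigr)\sum_{\mu=-s}^{s}\ket{\mu}\bra{\mu}.
\end{equation}
Finally, the normalization $\sum_m\absq{c_m}=1$ from (\ref{eqn:statvec}) together with the completeness of $\{\ket{\mu}\}$ inside the spin-$s$ carrier space yields ${\bf 1}$, which is the identity on the representation space (and, as usual for coherent states built from a single irreducible representation, this is the identity in the conventional sense on that subspace).

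The computation is essentially mechanical; the only conceptual point worth flagging is that the resolution is \emph{independent} of the choice of $\{c_m\}$, which is special to the Peter--Weyl mechanism: the off-diagonal coefficients $c_m c_{m'}^{*}$ ($m\ne m'$) are killed by the $\phi$- and $\psi$-integrations alone, so no compatibility condition on the fiducial vector is needed beyond normalization. The main thing to be careful about, rather than hard, is keeping the bra/ket index conventions for ${\sf R}^{(s)}_{m'm}$ consistent with those adopted in Appendix A so that the complex conjugate in the second factor lands on the right index.
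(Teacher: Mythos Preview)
Your proof is correct and follows essentially the same route as the paper: expand $\ket{\bf\Omega}\bra{\bf\Omega}$ via (\ref{eqn:CS3a})--(\ref{eqn:CS3b}), apply the Peter--Weyl orthogonality relation (\ref{eqn:ortho}) for the ${\sf R}^{(s)}_{m'm}$, and then use the normalization of $\{c_m\}$ together with completeness of $\{\ket{\mu}\}$. Your added remark that the $\psi$-integration is precisely what kills the off-diagonal $c_m c_{m'}^{*}$ terms (and hence why no condition on the fiducial vector beyond normalization is needed) makes explicit the point the paper alludes to when it says the computation ``indicates clearly what is to be changed when we use a general $\ket{\Psi_0}$.''
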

\noindent
For simplicity, we have neglected the difference between 
an integer $s$ and a half-integer $s$, which is not essential. 
Concerning the proof, 
there is an abstract way making full use of Schur's lemma
\cite{Pera,KlaSk,Perb}. 
However, we propose proving it by a slightly concrete method 
which is a natural extension of that used for the original spin CS 
\cite{Rad,Arec}. 
For it indicates clearly what is to be changed when we use a general 
$\ket{\Psi_0}$. 
\begin{proof}
We see from (\ref{eqn:CS3a})-(\ref{eqn:CS3b}) 
$
\bra{\bf\Omega}
= \sum_{{\tilde m} = -s}^{s}
\sum_{m''=-s}^{s}\, 
c_{\tilde m}^{*}  
\left( {\sf R}_{m''{}{\tilde m}}^{(s)}({\bf\Omega}) \right)^{*} 
\ \bra{m''}.
$\\
Then, with the aid of (\ref{eqn:ortho}) we have 
\begin{eqnarray}
\int \ket{\bf\Omega} d {\bf\Omega} \bra{\bf\Omega} 
&  = & \sum_{m=-s}^{s} \sum_{{\tilde m}=-s}^{s} 
c_m c_{\tilde m}^{*}  \Bigl\{ \sum_{m'=-s}^{s} 
\sum_{m''=-s}^{s}
\Bigl[ \int_0^{\pi} d \theta \ \sin\theta 
\nonumber \\ 
& & \times \int_0^{2\pi} d \phi \int_0^{2\pi} d \psi 
\left( {\sf R}_{m''{} {\tilde m}}^{(s)} ({\bf\Omega}) \right)^{*} \, 
{\sf R}_{m'{}m}^{(s)}({\bf\Omega}) 
\Bigr] \ket{m'}\bra{m''} \Bigr\} 
\nonumber \\
 & = & \sum_{m=-s}^{s} \sum_{{\tilde m}=-s}^{s}
c_m c_{\tilde m}^{*} 
\Bigl( 
\sum_{m'=-s}^{s} \sum_{m''=-s}^{s} 
\frac{8\pi^2}{2s + 1}\ \delta_{m'', m'} \delta_{{\tilde m}, m} 
\ket{m'}\bra{m''} 
\Bigr) 
\nonumber\\
&  = & \frac{8\pi^2}{2s + 1} \, 
\sum_{m=-s}^{s} 
c_m c_m^{*} 
\Bigl( \sum_{m'=-s}^{s} \proj{m'} \Bigr)
\nonumber\\
& = & \frac{8\pi^2}{2s + 1} 
( \sum_{m=-s}^{s} 
\absq{c_m} )\ {\bf 1}
= \frac{8\pi^2}{2s + 1} \ {\bf 1},
\label{eqn:proofPolresol}
\end{eqnarray}
which is exactly what we wanted. 
\end{proof}
%
\subsection{Overlap of two coherent states}
\label{sec:overlap}
The overlap of two CS 
$
\ket{{\bf\Omega}_\ell} 
\equiv \ket{\phi_\ell, \theta_\ell,  \psi_\ell} 
= \sum_{m_\ell = -s}^{s} c_{m_\ell} 
\ket{\phi_\ell, \theta_\ell, \psi_\ell; m_\ell} \  (\ell = 1, 2)
$
is one of those important quantities which we employ 
for various calculations in the CS. It can be derived, with the help of
(\ref{eqn:CS1}), (\ref{eqn:mat-R}) and (\ref{eqn:rotinverse}), as 
\begin{eqnarray}
\braket{{\bf\Omega}_2}{{\bf\Omega}_1} 
 & = & \sum_{m_1=-s}^{s} \sum_{m_2=-s}^{s}
c_{m_1} c_{m_2}^{*} 
\bra{m_2} 
{\hat R}(-\psi_2, -\theta_2, -\phi_2)  
{\hat R}(\phi_1, \theta_1, \psi_1)
\ket{m_1} 
\nonumber \\
&  = & \sum_{m_1=-s}^{s} \sum_{m_2=-s}^{s} 
c_{m_1} c_{m_2}^{*} 
{\sf R}_{m_2{}m_1}^{(s)}({\bf \Omega}_3) 
\nonumber\\
& = & \sum_{m_1=-s}^{s} \sum_{m_2=-s}^{s} 
c_{m_1} c_{m_2}^{*} 
\exp [- i (m_2 \varphi + m_1 \chi )]\ 
{\sf r}_{m_2{}m_1}^{(s)}(\vartheta),
\label{eqn:overlap1}   
\end{eqnarray}
where (\ref{eqn:mat-r}) gives the form of 
${\sf r}_{m_2{}m_1}^{(s)}(\vartheta)$ 
and ${\bf \Omega}_3 \equiv (\varphi, \vartheta, \chi)$ 
 is determined by (\ref{eqn:tworots}) 
 if we replace $\tilde {\bf \Omega}$ with ${\bf \Omega}_3$. 
 It is easy to see that any state 
$\ket{\bf\Omega}$ is normalized to unity, 
as conforms to our construction of the CS. 
\subsection{Typical matrix elements}
Typical matrix elements that we may employ in later are: 
\begin{equation}
\left\{
\begin{array}{l}
\bra{\bf\Omega} 
{\hat S}_{3} 
\ket{\bf\Omega} 
= A_0(\set{c_m}) \cos\theta    
-  A_1(\psi; \set{c_m}) \sin\theta 
 \\
\bra{\bf\Omega} 
{\hat S}_{+} 
\ket{\bf\Omega} 
= A_0(\set{c_m}) \sin\theta  \exp(i \phi)  
+ A_2({\bf\Omega}; \set{c_m})  
= \bra{\bf\Omega} 
{\hat S}_{-} 
\ket{\bf\Omega}^{*},
\end{array}
\right.
\label{eqn:mat-pol}
\end{equation}
where
$
{\hat S}_{\pm} = {\hat S}_1 \pm i {\hat S}_2 
$
and 
\begin{equation}
\left\{
\begin{array}{l}
A_0(\set{c_m}) 
=\sum_{m=-s}^{s} m \absq{c_m}  
\\
A_1(\psi; \set{c_m})  
= \frac12 \sum_{m=-s+1}^{s} f(s, m) 
[ c_m^{*} c_{m-1} \exp(i \psi) + c_m c_{m-1}^{*} 
\exp(- i \psi) ]
\\
A_2({\bf\Omega}; \set{c_m})
= \frac12 \sum_{m=-s+1}^{s} f(s, m) 
\exp(i \phi) 
\{ (1 + \cos\theta) \exp(i \psi) c_m^{*} c_{m-1} 
\\ 
\qquad \qquad - (1 - \cos\theta) \exp(- i \psi) 
c_m c_{m-1}^{*} \}
\\
f(s, m) 
= [(s+m)(s-m+1)]^{1/2}. 
\end{array}
\right.
\label{eqn:defA}
\end{equation}
By $\set{c_m}$ we mean a set of the coefficients of a FV. 
We can easily verify (\ref{eqn:mat-pol}) 
by (\ref{eqn:CS1}) and (\ref{eqn:RS}). 
\par
Generating functions for general matrix elements exist 
as in the original CS cases 
\cite{Arec}. 
In the normal product form it reads 
\begin{eqnarray}
X_N (z_+, z_3, z_-) 
& \equiv & 
\bra{{\bf\Omega}_2} \, 
\exp(z_+ {\hat S_+}) \, \exp(z_3 {\hat S}_3) \, 
\exp(z_- {\hat S_-}) \, 
\ket{{\bf\Omega}_1} 
\nonumber \\
& = & \bra{\Psi_0} {\hat R}^{+}({\bf\Omega}_2) 
{\hat R}({\bf\Omega}) {\hat R}({\bf\Omega}_1) \ket{\Psi_0} 
= \bra{\Psi_0} {\hat R}({\bf\Omega}^{''}) 
\ket{\Psi_0}, 
\label{eqn:genFunc}
\end {eqnarray}
where ${\bf\Omega}$ is related to $z_\ell \ (\ell = +, 3, -)$ 
through (\ref{eqn:EulerComp}); 
And ${\bf\Omega}^{''}$ is determined by (\ref{eqn:trirots}) 
with suitable changes. 
In principle any matrix elements can be obtained from (\ref{eqn:genFunc})
via partial differentiations with respect to appropriate 
variables $z_i\, (i = \pm, 3)$. 
\section{Path Integral via the General Spin CS}
\label{sec:PI}
We now give the PI expressions in terms of the general spin CS evolving 
from an arbitrary FV defined in $\S$ \ref{sec:CS}. 
In $\S$ \ref{sec:PI1} the PI form is given. 
It is proved in the following $\S$ \ref{sec:d2cPI}. 
Some specific aspects of the Lagrangian are discussed 
in $\S$ \ref{sec:topol-term} - $\S$ \ref{sec:gauge-pot}. 
\subsection{Path integrals}
\label{sec:PI1}
%
In this section we will give the explicit PI expression of the 
transition amplitude by means of the CS discussed in $\S$ \ref{sec:CS}. 
What we need is the propagator 
$K({\bf\Omega}_{f}, t_{f}; {\bf\Omega}_{i}, t_{i})$
 which starts from  $\ket{{\bf\Omega}_{i}}$ at $t=t_{i}$, evolves 
 under the effect of the Hamiltonian 
${\hat H} ({\hat S_{+}}, {\hat S_{-}}, {\hat S_{3}}; t)$ 
which is assumed to be a function of $\hat S_{+}$, 
$\hat S_{-}$ and $\hat S_{3}$ with a suitable operator ordering 
and ends up with $\ket{{\bf\Omega}_{f}}$ at $t=t_{f}$: 
\begin{equation}
K({\bf\Omega}_f, t_f; {\bf\Omega}_i, t_i)
=\braket{{\bf\Omega}_f, t_f}{{\bf\Omega}_i, t_i}
=\bra{{\bf\Omega}_f} \, 
{\rm T} \exp [- (i / \hbar) 
\int_{t_i}^{t_f} {\hat H}(t) d t ] \, 
\ket{{\bf\Omega}_i},
\label{eqn:propagator1}
\end{equation}
where T denotes the time-ordered product. 
The overcompleteness relation (\ref{eqn:polresolution}) affords us 
the well-known prescription of formal CSPI
 \cite{KlaSk,IKG} to give 
\begin{equation} 
K({\bf\Omega}_f, t_{f}; {\bf\Omega}_i, t_{i})
=\int  
\exp \{ (i / \hbar) S[{\bf\Omega}(t)] \} \, 
{\cal D} [{\bf\Omega}(t)],
\label{eqn:PI}
\end{equation}
where
\begin{equation}
S[{\bf\Omega}(t)]
\equiv \int_{t_i}^{t_f}   
\Bigl[ \ \bra{\bf\Omega} i \hbar 
\frac{\partial}{\partial t} 
 \ket{\bf\Omega} 
 - H({\bf\Omega}, t) \ \Bigr] 
\, d t 
 \equiv \int_{t_i}^{t_f}   
 L({\bf\Omega}, {\bf \dot \Omega},t) \, d t 
\label{eqn:action-pol}
\end{equation}
with 
\begin{equation}
H({\bf\Omega}, t) 
\equiv 
\bracket{{\bf\Omega}}{\hat H}{{\bf\Omega}}
\label{eqn:H}
\end{equation}
and we symbolized
\begin{equation}
 {\cal D}[{\bf\Omega}(t)] 
\equiv
\lim_{N \rightarrow \infty} 
\prod_{j=1}^{N} d \mu({\bf\Omega}_{t_j}) 
\equiv
\prod_t \frac{8\pi^2}{2s + 1} 
\, [\sin\theta(t) d \theta(t)
d \phi(t) d \psi(t)].
\label{eqn:paths-pol}
\end{equation} 
The explicit form of the Lagrangian yields
\begin{equation}
L({\bf\Omega}, {\dot {\bf\Omega}}, t) 
= \hbar \Bigl[ A_0(\set{c_m}) ({\dot \phi}\cos\theta + {\dot \psi}) 
+ A_3({\bf\Omega}, {\dot {\bf\Omega}}; \set{c_m}) \Bigr] 
- H({\bf\Omega}, t),
\label{eqn:Lagpol}
\end{equation}
where 
\begin{equation}
A_3({\bf\Omega}, {\dot {\bf\Omega}}; \set{c_m}) 
\equiv 
- A_1(\psi; \set{c_m}) \, {\dot \phi} \sin\theta 
+ A_4(\psi; \set{c_m}) \, {\dot \theta}
\label{eqn:deffA3}
\end{equation}
with $A_1(\psi; \set{c_m})$ in (\ref{eqn:defA}) and 
\begin{equation}
A_4(\psi; \set{c_m}) 
\equiv 
\frac{1}{2 i} 
\sum_{m=-s+1}^{s} f(s, m) 
[ c_m^{*} c_{m-1} \exp(i \psi)  
- c_m c_{m-1}^{*} \exp(- i \psi) ]. 
\label{eqn:def-A4}
\end{equation}
\par
The formal proof of (\ref{eqn:Lagpol}) is best carried 
out by the use of the identity:
\begin{eqnarray}
{\hat R}^{+}({\bf\Omega}) 
\frac{\partial}{\partial t} 
{\hat R}({\bf\Omega}) 
& = & {\hat R}^{+}({\bf\Omega})  
\Bigl( \, {\dot \phi} \frac{\partial}{\partial \phi}
+ {\dot \theta} \frac{\partial}{\partial \theta} 
+ {\dot \psi} \frac{\partial}{\partial \psi} 
\, \Bigr) {\hat R}(\bf\Omega) 
\nonumber \\
& =  & - i ({\dot \phi} \cos\theta + {\dot \psi}){\hat S_3}
+ \frac12 ( i {\dot \phi} \sin\theta - {\dot \theta} ) 
\exp(i \psi){\hat S_+} 
\nonumber \\
& & \qquad 
+ \frac12 ( i {\dot \phi} \sin\theta + {\dot \theta} ) 
\exp(- i \psi){\hat S_-}
\label{eqn:pol-R}
\end{eqnarray}
and (\ref{eqn:mat-pol}). 
Since the relation (\ref{eqn:pol-R}) is independent of $s$, 
it can be readily verified by the use of a $2 \times 2$ matrix (\ref{eqn:s1/2}). 
The detailed and substantial proof of (\ref{eqn:Lagpol}) 
is given in the following subsection $\S$ \ref{sec:d2cPI}. 
\subsection{From discrete to continuous path integrals}
\label{sec:d2cPI}
We can justify the spin CSPI in $\S$ \ref{sec:PI1} by showing 
the process from the discrete PI to the continuous ones. 
The proof is a straightforward generalization of 
that in \cite{PIspin}. 
The method below may be applied to other CSPI; 
For example, the SU(1, 1)  or SU(3)  cases. 
\begin{theorem}
The quantum evolution of a physical system in terms of the general SU(2) CS 
is represented by (\ref{eqn:PI})-(\ref{eqn:def-A4}).
\end{theorem}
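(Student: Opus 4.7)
The plan is to derive the path integral expression by time-slicing and inserting the resolution of unity from Theorem 1. I would partition $[t_i, t_f]$ into $N + 1$ equal subintervals of width $\epsilon$, insert the completeness relation (\ref{eqn:polresolution}) at each of the $N$ intermediate times, and analyze the resulting product of infinitesimal matrix elements $\bra{{\bf\Omega}_{j+1}} \exp[-(i\epsilon/\hbar)\hat H(t_j)] \ket{{\bf\Omega}_j}$. To first order in $\epsilon$, each such block factorizes as $\braket{{\bf\Omega}_{j+1}}{{\bf\Omega}_j}\bigl[1 - (i\epsilon/\hbar)\,h({\bf\Omega}_{j+1}, {\bf\Omega}_j; t_j) + O(\epsilon^2)\bigr]$ with $h = \bra{{\bf\Omega}_{j+1}}\hat H\ket{{\bf\Omega}_j}/\braket{{\bf\Omega}_{j+1}}{{\bf\Omega}_j}$, which collapses to $H({\bf\Omega}, t)$ of (\ref{eqn:H}) once the neighboring labels coalesce.

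The central task is then to expand the overlap for small time separation. I would start from the explicit formula (\ref{eqn:overlap1}) and use the Euler-angle composition law (\ref{eqn:tworots}) to write ${\bf\Omega}_3 = \tilde{\bf\Omega}$ as a function of $({\bf\Omega}_{j+1}, {\bf\Omega}_j)$. For ${\bf\Omega}_{j+1} - {\bf\Omega}_j = \epsilon\,\dot{\bf\Omega}(t) + O(\epsilon^2)$ this gives $\braket{{\bf\Omega}_{j+1}}{{\bf\Omega}_j} = 1 + \epsilon\,\Phi({\bf\Omega}, \dot{\bf\Omega}) + O(\epsilon^2)$, where $\Phi = [\partial_t\bra{{\bf\Omega}}]\ket{{\bf\Omega}} = -\bra{{\bf\Omega}}\partial_t\ket{{\bf\Omega}}$ by differentiation of the normalization identity $\braket{{\bf\Omega}}{{\bf\Omega}} = 1$. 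Taking logarithms, summing the $N$ infinitesimal increments, and combining with the Hamiltonian factor yields in the $N \to \infty$ limit the formal continuous expression (\ref{eqn:PI})--(\ref{eqn:paths-pol}) with the kinetic piece of the action given by $i\hbar\!\int\! dt\,\bra{{\bf\Omega}}\partial_t\ket{{\bf\Omega}}$.

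To bring this into the explicit form (\ref{eqn:Lagpol}), I would evaluate $\bra{{\bf\Omega}}\partial_t\ket{{\bf\Omega}} = \bra{\Psi_0}\hat R^{+}({\bf\Omega})\partial_t\hat R({\bf\Omega})\ket{\Psi_0}$ using the identity (\ref{eqn:pol-R}) already established. Expanding $\ket{\Psi_0}$ in the $\ket{m}$ basis as in (\ref{eqn:statvec}), the three pieces on the right side of (\ref{eqn:pol-R}) produce, respectively, $\bra{\Psi_0}\hat S_3\ket{\Psi_0} = A_0(\{c_m\})$ and $\bra{\Psi_0}\hat S_{\pm}\ket{\Psi_0} = \sum_{m=-s+1}^{s} f(s,m)\,c_m^{*}c_{m-1}$ (and its conjugate), which may be rewritten through the phases $e^{\pm i\psi}$ appearing in (\ref{eqn:pol-R}) exactly in terms of $A_1 \pm i A_4$ as defined in (\ref{eqn:defA}) and (\ref{eqn:def-A4}). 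After combining the coefficients of $\dot\phi$, $\dot\theta$ and $\dot\psi$, one finds $i\hbar\bra{{\bf\Omega}}\partial_t\ket{{\bf\Omega}} = \hbar\bigl[A_0(\dot\phi\cos\theta + \dot\psi) + A_3\bigr]$, matching (\ref{eqn:Lagpol}) and (\ref{eqn:deffA3}).

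The main obstacle I anticipate is the rigorous control of the short-time expansion of the overlap: the Euler-angle composition (\ref{eqn:tworots}) mixes small differences of $\phi, \theta, \psi$ in a non-linear way, so one must verify that the leading-order computation via (\ref{eqn:overlap1}) agrees with the direct differentiation using (\ref{eqn:pol-R}), and that all $O(\epsilon^2)$ corrections are negligible after summation. As is standard in CSPI derivations, the resulting continuous path integral is understood in a formal sense; the substantive content of the theorem is that the discretized product, once expanded consistently to first order in $\epsilon$, organizes itself into the exponential of $(i/\hbar)\int L\,dt$ with $L$ given by (\ref{eqn:Lagpol}).
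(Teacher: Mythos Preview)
Your time-slicing framework and treatment of the Hamiltonian factor coincide with the paper's. The genuine difference lies in how the infinitesimal overlap $\braket{{\bf\Omega}_j}{{\bf\Omega}_{j-1}}$ is handled. You identify its $O(\epsilon)$ part abstractly as $-\bra{{\bf\Omega}}\partial_t\ket{{\bf\Omega}}$ by Taylor-expanding and invoking the normalization identity, and then evaluate that derivative through the operator identity (\ref{eqn:pol-R}). The paper, by contrast, expands the overlap \emph{directly} from (\ref{eqn:overlap1}): it first works out the composite Euler angles $\tilde{\bf\Omega}_j$ to $O(\epsilon)$ from (\ref{eqn:tworots}), obtaining the infinitesimal relations (\ref{eqn:it-tworots}); it then expands ${\sf r}^{(s)}_{mm'}(\tilde\theta_j)$ via the Majorana--Wigner formula (\ref{eqn:mat-r})--(\ref{eqn:N(s, m, m'; t)}) and shows that only three $(t,m-m')$ combinations survive at $O(\epsilon)$, namely $(0,0)$, $(0,-1)$, $(1,1)$. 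Summing these yields (\ref{eqn:it-overlap-f}), from which the discrete action (\ref{eqn:ds-action}) and hence (\ref{eqn:Lagpol}) follow.

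The paper in fact explicitly contrasts the two routes: the computation through (\ref{eqn:pol-R}) is what it calls the ``formal proof'' already outlined in \S\ref{sec:PI1}, while the $d$-matrix expansion is presented as the ``detailed and substantial proof'' of Theorem~2. Your route is shorter and more structural; the paper's buys an independent, fully discrete verification that does not presuppose the operator identity (\ref{eqn:pol-R}), and it makes transparent the mechanism by which $A_0$ arises from the diagonal ($m'=m$) piece of the rotation matrix while $A_1$ and $A_4$ arise from the next-to-diagonal ($m'=m\pm1$) pieces. The ``main obstacle'' you flag---checking that the explicit expansion of (\ref{eqn:overlap1}) agrees with the differentiation via (\ref{eqn:pol-R})---is precisely what the paper's proof carries out.
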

\begin{proof}
By dividing the time interval into infinite numbers of 
an infinitesimal one $\epsilon$ in (\ref{eqn:propagator1}) 
and the successive use of the overcompleteness relation, 
i.e. Eq. (\ref{eqn:polresolution}) in Theorem \ref{th:resUnit}, 
we obtain
\begin{eqnarray}
K({\bf \Omega}_{f}, t_{f}; {\bf \Omega}_{i}, t_{i}) 
& = & \bra{{\bf\Omega}_f} \, 
{\rm T} \exp [- (i / \hbar) \int_{t_i}^{t_f} {\hat H}(t) \, d t ] \, 
\ket{{\bf\Omega}_i} 
\nonumber \\
& =& \lim_{N \rightarrow \infty} \int d \mu ({{\bf \Omega}_{1}}) \cdots 
\int d \mu ({{\bf \Omega}_{N}}) 
\braket{{\bf \Omega}_{f}, t_{f}}{{\bf \Omega}_N, t_N} \cdots
\nonumber\\
& & \qquad \qquad 
\times \braket{{\bf \Omega}_{j}, t_{j}}{{\bf \Omega}_{j -1}, t_{j -1}}
\cdots
\braket{{\bf \Omega}_{1}, t_{1}}{{\bf \Omega}_{i}, t_{i}},
\label{eqn:propagator2} 
\end{eqnarray}
where 
$
\epsilon = [1 / (N+1)] (t_{f} - t_{i})
$ 
and 
$
t_{j} = t_{i} + j \epsilon.
$ 
It is clear that we only have to consider a propagator 
during an infinitesimal time interval, which gives 
\begin{eqnarray}
\braket{{\bf \Omega}_j, t_j}{{\bf \Omega}_{j -1}, t_{j -1}} 
& \equiv & \bra{{\bf\Omega}_j} \, 
 {\rm T} \exp[- (i / \hbar) \int_{t_{j -1}}^{t_j} {\hat H}(t) \, d t ]\,
\ket{{\bf \Omega}_{j-1}}
\nonumber \\ 
& \simeq &
\bra{{\bf \Omega}_j} 
\Bigl( \, 1- (i / \hbar) \int_{t_{j-1}}^{t_j} d t 
\ {\hat H}({\hat S}_{+}, {\hat S}_{-}, {\hat S}_3;t) 
\, \Bigr)  
\ket{{\bf \Omega}_{j-1}}
\nonumber \\
&= &\braket{{\bf \Omega}_j}{{\bf \Omega}_{j -1}}
\Bigl(\, 1- (i / \hbar) \, \epsilon \, 
H({\bf \Omega}_{j}, 
{\bf \Omega}_{j-1}; t_{j-1}) \, \Bigr) 
\nonumber\\
& \simeq & 
\exp[ \ln \braket{{\bf \Omega}_j}{{\bf \Omega}_{j -1}}] 
\exp\Bigl[\, - (i / \hbar) \, \epsilon \, 
H({\bf \Omega}_{j}, 
{\bf \Omega}_{j-1}; t_{j-1}) \, \Bigr], 
\label{eqn:itprop}
\end{eqnarray}
where 
\begin{equation}
H({\bf\Omega}^{''}, {\bf\Omega}^{'}; t)
\equiv
\frac{
\bra{\bf\Omega^{''}}
{\hat H}({\hat S}_{+}, {\hat S}_{-}, {\hat S}_3;t)
\ket{{\bf\Omega}^{'}}
}
{
\braket{{\bf\Omega}^{''}}{{\bf\Omega}^{'}}
}.
\label{eqn:H-Omega}
\end{equation}
\par
Our next task is to compute the infinitesimal overlap 
$\braket{{\bf \Omega}_j}{{\bf \Omega}_{j -1}}$ 
in (\ref{eqn:itprop}). 
Representing 
$\ket{{\bf \Omega}_j} 
= \sum_m c_m {\hat R}({\bf \Omega}_j) \ket{m}$ 
and 
$\ket{{\bf \Omega}_{j - 1}} 
= \sum_{m'} c_{m'} {\hat R}({\bf \Omega}_{j - 1}) \ket{m'}$,  
we see from (\ref{eqn:overlap1}) 
\begin{eqnarray}
\braket{{\bf\Omega}_j}{{\bf\Omega}_{j-1}} 
& = & \sum_{m =-s}^s \sum_{m' =-s}^s 
c_m^* c_{m'}
\bra{m} {\hat R}({\tilde {\bf\Omega}_j}) \ket{m'}
\nonumber \\
& = & \sum_{m =-s}^s \sum_{m' =-s}^s 
c_m^* c_{m'} 
\exp[-i ( m {\tilde \phi_j} + m' {\tilde \psi_j}) ] 
\, {\sf r}_{m{}m'}({\tilde \theta_j}),
\label{eqn:it-overlap} 
\end{eqnarray}
where ${\tilde {\bf\Omega}_j }
\equiv ({\tilde \phi_j}, {\tilde \theta_j}, 
{\tilde \psi_j})$ 
satisfies the same relation (\ref{eqn:overlap1}) 
as ${\bf\Omega}_3$ 
if we put ${\bf\Omega}_2 = {\bf\Omega}_j = (\phi_j, \theta_j, \psi_j)$
and ${\bf\Omega}_1 = {\bf\Omega}_{j - 1} = (\phi_{j - 1}, \theta_{j-1}, \psi_{j-1})$. 
Searching for concrete relations between 
${\tilde {\bf\Omega}_j }$, ${\bf\Omega}_j $ and 
${\bf\Omega}_{j - 1}$, we are brought to (\ref{eqn:tworots}). 
Then, by the use of the relation:
$                                                                       
\Delta \theta_{j} 
\equiv \theta_j - \theta_{j-1} 
\simeq {\dot \theta}_{j} \epsilon, 
\Delta \phi_{j} \equiv \phi_j - \phi_{j-1} 
\simeq {\dot \phi_j} \epsilon 
$
and 
$
\Delta \psi_{j} 
\equiv \psi_j - \psi_{j-1} \simeq {\dot \psi}_{j} \epsilon
$ 
we have, to ${\rm O}(\epsilon)$, 
\begin{equation}
\left\{
\begin{array}{l}
\tilde \theta_j \simeq \Delta \theta_j, 
\quad 
\cos(\frac12 {\tilde \theta_j})  \simeq 1,
\quad 
\sin(\frac12 {\tilde \theta_j})  \simeq 
\frac12 \sin({\tilde \theta_j}) 
\simeq  {\rm O}(\epsilon), 
 \\
\sin{\tilde \theta_j} \exp( i {\tilde \phi_j} ) 
\simeq  - \exp( - i \psi_j ) 
( \Delta\theta_j + i \Delta\phi_j \sin\theta_j ),
\\
\exp[ i ({\tilde \phi_j} + {\tilde \psi_j}) ] 
 \simeq 1 - i (\Delta\phi_j \cos\theta_j 
+ \Delta\psi_j).
\end{array}
\right.
\label{eqn:it-tworots}
\end{equation}
The relation (\ref{eqn:it-tworots}) above is the master key 
to the proof which we invoke implicitly 
in (\ref{eqn:it-overlap1})-(\ref{eqn:it-overlap3}) below. 
Now we have from (\ref{eqn:mat-r}) 
\begin{eqnarray}
{\sf r}_{m{}m'}({\tilde \theta_j}) 
& = & \sum_t N(s, m, m'; t) 
\cdot [\cos({\tilde \theta_j} / 2)]
^{(s + m - t) + (s - m' - t)} 
\nonumber \\
& & \qquad \qquad 
\times [\sin({\tilde \theta_j} / 2)]^{(t) + (t - m + m')}.
\label{eqn:N2}
\end{eqnarray}
From the restrictions on the factorials 
in $N(s, m, m'; t)$ in (\ref{eqn:N(s, m, m'; t)}) 
all the terms in the parentheses in the 
exponents in (\ref{eqn:N2}) 
have to be zero or more. 
Then it becomes apparent that 
we only need to pick up, to ${\rm O}(\epsilon)$, 
the terms fulfilling the combinations 
$(t, t - m + m') = (0, 0), (0, 1), (1, 0)$, 
from which the following three cases 
appear: 
(i)$t = 0, m = m'$,
(ii)$t = 0, m' = m + 1$,
(iii)$t = 1, m' = m - 1$. 
Evaluating terms in (\ref{eqn:it-overlap}), 
the following expressions for each case 
are derived by (\ref{eqn:it-tworots}), 
(\ref{eqn:N2}) and (\ref{eqn:N(s, m, m'; t)}). 
First, consider the case (i). 
Notice that 
$N(s, m, m' = m; t = 0) = 1$ and 
${\sf r}_{m{}m}({\tilde \theta_j}) 
\simeq 1 + {\rm O} \Bigl( [\Delta(\theta_j)]^2 \Bigr)$. 
The relevant terms are 
\begin{equation}
\sum_{m=-s}^{s} \absq{c_m}  
\exp[- i \, m ({\tilde \phi_j} + {\tilde \psi_j}) ] 
\, {\sf r}_{m{}m}({\tilde \theta_j}) 
\simeq
1 + i  \, A_0(\set{c_m}) 
(\Delta\phi_j \cos\theta_j 
+ \Delta\psi_j). 
\label{eqn:it-overlap1}
\end{equation}
Second, for the case (ii) $N(s, m, m' = m + 1; t = 0) 
= f(s, m')$; See (\ref{eqn:defA}) for the definition 
of $f(s, m)$. 
So one finds that the corresponding terms become 
\begin{eqnarray}
 &  & \frac12 \sum_{m' = -s +1}^{s} c_{m'} c_{m'-1}^* 
\exp[ - i \, m' ({\tilde \phi}_j + {\tilde \psi}_j) ] 
f(s, m') \cdot [\sin{\tilde \theta}_j \exp( i \, {\tilde \phi_j} )]
\nonumber \\
& \simeq &
\frac12 \sum_{m = -s +1}^{s} c_{m} c_{m - 1}^* 
[1 + i \, m (\Delta\phi_j \cos\theta_j + \Delta\psi_j) ]
\nonumber \\
& & \qquad 
\times f(s, m) [ - \exp( - i \psi_j ) 
(\Delta\theta_j + i \, \Delta\phi_j \sin\theta_j) ]
\nonumber \\
&  \simeq &  -\frac12 \sum_{m = -s +1}^{s} c_{m} c_{m - 1}^*
f(s, m) \exp( - i \psi_j ) 
(\Delta\theta_j + i \, \Delta\phi_j \sin\theta_j) 
+ {\rm O}(\epsilon^2),
\label{eqn:it-overlap2}
\end{eqnarray}
where we have renamed $m'$ $m$. 
Third, we can deal with the case (iii) 
just in the same manner as (ii). 
With $N(s, m, m' = m - 1; t = 1) 
= - f(s, m)$  the corresponding terms become 
\begin{eqnarray}
& - & \frac12 \sum_{m = -s +1}^{s} c_m^* c_{m-1} 
f(s, m) 
\exp[- i \, (m - 1) 
({\tilde \phi}_j + {\tilde \psi}_j)] 
\cdot 
[ \sin{\tilde \theta}_j \exp(- i \, {\tilde \phi_j}) ] 
\nonumber \\
 & \simeq &
- \frac12 \sum_{m = -s +1}^{s} c_{m}^* c_{m - 1} 
 [1 + i \, (m - 1) (\Delta\phi_j \cos\theta_j 
+ \Delta\psi_j) ]
\nonumber \\
& & \qquad \qquad 
\times f(s, m) [ - \exp( i \psi_j ) 
(\Delta\theta_j - i \, \Delta\phi_j \sin\theta_j) ]
\nonumber  \\
& \simeq &  \frac12 \sum_{m = -s +1}^{s} c_{m}^* c_{m - 1} 
f(s, m) \exp(i \psi_j ) 
(\Delta\theta_j - i \, \Delta\phi_j \sin\theta_j) 
+ {\rm O}(\epsilon^2).
\label{eqn:it-overlap3}
\end{eqnarray}
Eventually, putting the above results all together, 
we obtain the infinitesimal overlap:
\begin{eqnarray}
\braket{{\bf\Omega}_j}{{\bf\Omega}_{j-1}} 
& \simeq & 
1 + i A_0(\set{c_m}) (\Delta\phi_j \cos\theta_j + \Delta\psi_j)
\nonumber \\ 
& & - \frac12 \sum_{m = -s +1}^{s} 
f(s, m) [ c_{m} c_{m - 1}^* 
\exp( - i \psi_j ) (\Delta\theta_j 
+ i \, \Delta\phi_j \sin\theta_j) 
\nonumber  \\
& &  - c_{m}^* c_{m - 1} 
\exp( i \psi_j ) (\Delta\theta_j 
- i \, \Delta\phi_j \sin\theta_j) ].
\label{eqn:it-overlap-f}
\end{eqnarray}
\par 
Substituting (\ref{eqn:it-overlap-f}) into (\ref{eqn:itprop}) 
and then (\ref{eqn:itprop}) into (\ref{eqn:propagator2}), 
we finally arrive at the expression to ${\rm O}(\epsilon)$: 
\begin{equation}
K({\bf\Omega}_{f}, t_{f}; {\bf\Omega}_{i}, t_{i}) 
= \lim_{N \rightarrow \infty} \int d \mu ({\bf\Omega}_{1}) 
\cdots \int d \mu ({\bf\Omega}_{N}) 
\exp[(i / \hbar) \, S_{1, N+1}]
\label{eqn:propagator3} 
\end{equation}
with
\begin{eqnarray}
S_{1, N+1} & = & \sum_{j = 1}^{N+1} 
\Bigl\{ 
\, 
\hbar 
\Bigl[ A_0(\set{c_m}) 
(\Delta\phi_j \cos\theta_j + \Delta\psi_j) 
\nonumber \\
& & \qquad  - A_1(\psi_j; \set{c_m}) \Delta\phi_j \sin\theta_j 
+ A_4(\psi_j; \set{c_m}) \Delta\theta_j \Bigr]
\nonumber\\
& &  
\qquad  
 - \epsilon \, H({\bf\Omega}_{j}, 
{\bf\Omega}_{j-1}; t_{j-1}) \Bigr\} .
\label{eqn:ds-action}
\end{eqnarray} 
See (\ref{eqn:defA}) and (\ref{eqn:def-A4}) 
for the definitions of $A_1$ and $A_4$ respectively. 
Hence, it is easy to see that the expressions 
(\ref{eqn:propagator3})-(\ref{eqn:ds-action}) agree with 
those of (\ref{eqn:PI})-(\ref{eqn:def-A4}) in 
the $\epsilon \rightarrow 0$ limit. \\
\end{proof}
\noindent
We have thus arrived at {\em the generic expressions of the PI 
via the SU(2)CS}, i.e. (\ref{eqn:PI})-(\ref{eqn:def-A4}), 
which constitute one of the main results of the present paper. 
They correspond to (\ref{eqn:defA-CCS})-(\ref{eqn:paths-CCSPI}) for the CCS. 
The complex variable forms of spin CS and CSPI are presented 
in later section ($\S$ \ref{sec:c-parametCS}). 
There, with the aid of the form, 
we can easily see the above correspondence 
by the contraction procedure.  
The special case when $c_m = 1$ (for a sole $m$), 
which includes the conventional SU(2)CSPI, 
was once treated in \cite{PIspin}.
\par 
The transition amplitude between any two states $\ket{i}$ 
at $t = t_{i}$ 
and $\ket{f}$ at $t = t_{f}$ can be evaluated by 
\begin{equation}
\int \!\!\! \int d \mu({\bf\Omega}_f)
d \mu({\bf\Omega}_i) \ 
\braket{f}{{\bf\Omega}_{f}} 
({\bf\Omega}_{f}, t_{f}; {\bf\Omega}_{i}, t_{i})
\braket{{\bf\Omega}_{i}}{i}.
\label{eqn:tranamp}
\end{equation}
\par
Finally, we will make an auxiliary discussion on the derivation of CSPI. 
We often see a slightly different approach to PI in literature
\cite{Ryder,SaKi,Weinberg1-PI}; 
It is essentially after an original one due to Dirac 
\cite{Dirac-Sov,Dirac-QM}. 
Now let us proceed with it. 
Define a ``moving frame'' state vector 
\cite{Ryder,SaKi}, 
which represents an intermediate state, as
\begin{equation}
\ket{{\bf\Omega}_j, t_j} 
\equiv 
{\hat U}^+ (t_j) \ket{{\bf\Omega}_j} 
\qquad 
{\rm with} 
\qquad 
{\hat U} (t_j) 
\equiv 
{\rm T} \exp[- (i / \hbar) 
\int_{t_i}^{t_j} {\hat H}(t) \, d t ].
\label{eqn:mv-frame-ket}
\end{equation}
Then the resolution of unity also holds for 
$\ket{{\bf\Omega}_j, t_j}$ 
whose successive use leads us formally 
to the same expression as (\ref{eqn:propagator2}). 
However, this time we have, 
as a {\em consequence} of (\ref{eqn:mv-frame-ket}), 
\begin{equation}
\braket{{\bf \Omega}_j, t_j}{{\bf \Omega}_{j -1}, t_{j -1}} 
= \bra{{\bf \Omega}_j} \, 
{\rm T} \exp[- (i / \hbar) 
\int_{t_{j -1}}^{t_j} {\hat H}(t) \, d t ]\,
\ket{{\bf \Omega}_{j-1}}.
\label{eqn:overlap-mv-frames}
\end{equation}
In contrast the same expression is used as a {\em definition} 
in (\ref{eqn:itprop}). 
Note that in ${\hat U}^+ (t_j)$ the order of a set of operators 
$\set{\exp[(i / \hbar) 
\int_{t_{j - 1}}^{t_j} {\hat H}(t) d t]}$ 
is reversed to that in ${\hat U} (t_j)$; 
It becomes anti-chronological. 
The residuary procedure to CSPI is the same as the former. 
\par
In the phase space PI ``moving frame'' vectors $\set{\ket{q_j, t_j}}$ 
are defined as the eigenstates of the operators 
${\hat q}(t_j) \equiv {\hat U}^+(t_j) {\hat q} {\hat U}(t_j)$ 
in the Heisenberg picture 
\cite{SaKi,Weinberg1-PI}, 
from which
$ \ket{q_j, t_j} = U^+ (t_j) \ket{q_j}$  results. 
They are truly the precise descriptions of intermediate states. 
In the present spin PI case we can do the same thing 
for a simple FV $\ket{{\bf\Psi}_0} = \ket{m}$: 
$\ket{{\bf\Omega}_j, t_j}$ may be 
defined as the eigenstate of the operator 
$[{\hat R} ({\bf\Omega}) 
{\hat S}_3 {\hat R}^{+} ({\bf\Omega})]_{t = t} 
\equiv {\hat U}^+ (t) \cdot [{\hat R}({\bf\Omega}) 
{\hat S}_3 {\hat R}^{+} ({\bf\Omega}) ]_{t = 0} 
\cdot {\hat U}(t)$ 
which is in the Heisenberg picture. It corresponds to (\ref{eqn:CCSev2})
 for the CCS. However, for a generic FV, we are not able to interpret 
$\ket{{\bf\Omega}_j, t_j}$ 
as eigenvectors of some operators no more. 
And thus we have adopted the former approach 
which seems more plausible to the generic CSPI 
in the sense. 
Of course, {\em for any FV} CS are clearly defined 
and the overcompleteness relation holds as (\ref{eqn:polresolution}); 
Indeed it is almost the only relation that CS enjoy 
\cite{KlaSk}. 
So we can perform PI as we saw it.
 The relation is the fundamental 
feature of CS that makes CS such a flexible tool 
for PI and that makes CSPI so fascinating. 
%
%
\subsection{The topological term}
\label{sec:topol-term}
The term with the square brackets in 
the Lagrangian (\ref{eqn:Lagpol}), 
\begin{equation} 
A_0(\set{c_m}) ({\dot \phi}\cos\theta + {\dot \psi}) 
+ A_3({\bf\Omega}, {\dot {\bf\Omega}}; \set{c_m}),
\label{eqn:topol-term0}
\end{equation}
stemming from $\bra{\bf\Omega}(\partial / \partial t) 
\ket{\bf\Omega}$, 
may be called the ``topological term'' 
that is related to the geometric phases.
\footnote
{
The significance of the term was once recognized by 
 Kuratsuji, who called it the ``canonical term'', 
in relation to the semiclassical quantization; 
 note that the geometric phase associated with the term was called the 
 ``canonical phase'' in \cite{KMa} and \cite{KraC}; 
See \cite{KraC}  and references therein. 
We call them just the geometric phases  in the present paper.
}
Here the $A_3$-term is given by (\ref{eqn:deffA3}). 
And one can see that the first term in the topological term 
gives a description of monopoles {\em \`a la} 
BMS${}^2$  \cite{Bal,Aitch}. 
The fictitious gauge potentials corresponding 
to the whole topological term are also non-singular as 
\cite{Bal} and \cite{Aitch}; 
See $\S$ \ref{sec:gauge-pot} for the point. 
\par
In the differential 1-form, the whole topological term $\kappa$ reads: 
\begin{equation}
\kappa  =  A_0(\set{c_m}) \, 
(\cos\theta \, d \phi + d \psi) 
- A_1(\psi; \set{c_m}) \, \sin\theta \, d \phi 
+ A_4(\psi; \set{c_m}) \, d \theta
\label{eqn:topol-term-diff1}
\end{equation}
and in the 2-form:
\begin{eqnarray}
d \kappa 
& = & - \Bigl[
A_0(\set{c_m}) \sin\theta 
+ A_1(\psi; \set{c_m}) \cos\theta 
\Bigr] \, 
d \theta \wedge d \phi
\nonumber \\ 
& & \qquad 
- A_4(\psi; \set{c_m}) \, \sin\theta \, 
d \phi \wedge d \psi 
+ A_1(\psi; \set{c_m}) \, 
d \psi \wedge d \theta.
\label{eqn:topol-term-diff2}
\end{eqnarray}
One may see that {\em the strength of the well-known 
monopole-type term depends on $A_0$}, i.e. 
the expectation value of the 
quantum number $m$ in the state of $\ket{\Psi_0}$. 
{\em In addition we have other fields 
with $A_1$ and $A_4$-terms describing the effects of 
interweaving coefficients of $\ket{\Psi_0}$ with their next ones.} 
We have thus obtained  the {\em general expressions of the topological term
in the SU(2)CSPI},
i.e. (\ref{eqn:topol-term-diff1})-(\ref{eqn:topol-term-diff2}), 
which are also one of the main results of the present paper. 
\par
For a FV with $c_m = 1$ (for a sole $m$), 
since $A_1$ and $ A_4$-terms vanish 
we have 
\begin{equation}
\kappa  =  m \, (\cos\theta \, d \phi + d \psi), 
\qquad 
d \kappa = - m \sin\theta \, d \theta \wedge d \phi,
\label{eqn:topol-term-diff3}
\end{equation}
which represents a monopole with the strength $m$. 
\subsection{Semiclassical limit}
\label{sec:sc-spinCSPI}
In this subsection  we will investigate what information 
the semiclassical limit of CSPI brings. 
In the situation where $\hbar \ll S[{\bf\Omega}(t)]$, 
the principal contribution in (\ref{eqn:PI}) 
comes from the path that satisfies $\delta S = 0$, which requires 
the Euler-Lagrange equations for 
$L({\bf\Omega}, {\bf \dot \Omega},t)$. 
Then we obtain
\begin{equation}
\left
\{
\begin{array}{l}
\hbar 
\{ 
[A_0(\set{c_m}) \sin\theta 
+ A_1(\psi; \set{c_m}) \cos\theta 
] {\dot \phi} 
+ A_1(\psi; \set{c_m}) {\dot \psi}
\} 
= - ( \partial H / \partial \theta )
\\
\hbar 
\{
[
A_0(\set{c_m}) \sin\theta 
+ A_1(\psi; \set{c_m}) \cos\theta 
] {\dot \theta} 
- [A_4(\psi; \set{c_m}) \sin\theta] {\dot \psi}  
\} 
= \partial H / \partial \phi
\\
\hbar 
\{
[A_4(\psi; \set{c_m}) \sin\theta] {\dot \phi}  
+ A_1(\psi; \set{c_m}) {\dot \theta} 
\} 
= {\partial H / \partial \psi},
\label{eqn:vareq-pol} 
\end{array}
\right.
\end{equation}
where $A_0$ and $A_1$ are given by (\ref{eqn:defA}) 
and $A_4$ is defined by (\ref{eqn:def-A4}).
\par
The expressions in (\ref{eqn:vareq-pol}) 
are the variational equations for the spin CS parameters $\bf\Omega$, 
which may be compared with (\ref{eqn:CCSvareq}) 
in CCSPI. 
\par
The special case, 
i.e. that for SU(2)CS with a FV 
$\ket{\Psi_0} = \ket{m}$, was once treated in 
\cite{PIspin}; 
Putting $\ket{\Psi_0} = \ket{-s}$ 
brings us back to the results for the original case 
 \cite{Klau,KUS}.
%
\subsection{The nature of fictitious gauge potentials}
\label{sec:gauge-pot}
We now investigate the nature of fictitious gauge potentials 
corresponding to the whole topological term $\kappa$ in 
(\ref{eqn:topol-term-diff1}). 
We follow the strategy in \cite{Aitch}. 
\par
Using the orthogonal coordinates: 
\begin{equation}
\xi = \phi + \psi,
\qquad
\eta = \phi - \psi
\label{eqn:def-ort-coords}
\end{equation}
with (\ref{eqn:real-R}), 
the metric is 
\begin{equation}
d s^2 = d {\bf x}^2 
= \frac14 [ d \theta^2 + \cos^2 (\theta / 2) d \xi^2 
+ \sin^2 (\theta / 2) d \eta^2 ] 
\equiv d {\bf n}^2, 
\label{eqn:metric}
\end{equation}
where ${\bf n}$ stands for a unit vector 
in the $(\theta, \xi, \eta)$-coordinates. 
Now let us call the fictitious gauge potential 
${\bf {\tilde A}} 
= ({\tilde A}_\theta, {\tilde A}_\xi, {\tilde A}_\eta)$; 
We use ${\bf \tilde A}$ so as not to confuse them with 
$A_i\ (i = 1, \cdots, 4)$-terms in $\S$ \ref{sec:CS} 
- $\S$ \ref{sec:PI}. 
Then we have 
\begin{equation}
\kappa = {\bf {\tilde A} \cdot} d {\bf n}
= {\tilde A}_\theta \, \frac12 d \theta 
+ {\tilde A}_\xi \, \frac12 \cos (\theta / 2) d \xi 
+ {\tilde A}_\eta \, \frac12 \sin (\theta / 2) d \eta. 
\label{eqn:topol-term}
\end{equation}
Thus we obtain 
\begin{equation}
\left\{
\begin{array}{l}
{\tilde A}_\theta 
= 2 A_4 \Bigl(\frac12 (\xi - \eta), \set{c_m} \Bigr),
\\
{\tilde A}_\xi = 2 A_0(\set{c_m}) \cos(\frac12 \theta) 
- 2 A_1 \Bigl(\frac12 (\xi - \eta), \set{c_m} \Bigr) 
\sin(\frac12 \theta),
\\
{\tilde A}_\eta = - 2 A_0(\set{c_m}) \sin(\frac12 \theta) 
- 2 A_1 \Bigl(\frac12 (\xi - \eta), \set{c_m} \Bigr) 
\cos(\frac12 \theta),
\end{array}
\right.
\label{eqn:fic-gauge-pots}
\end{equation}
which are evidently non-singular. 
\section{Complex Variable Parametrizations of the Spin CS}
\label{sec:c-parametCS}
The generic spin CS and CSPI in $\S$ \ref{sec:CS} 
and $\S$ \ref{sec:PI} can be put into complex variable forms 
like the conventional ones 
\cite{Rad,Pera,Arec}; 
The number of complex variables is, however, twice; This causes a need
for a supplementary condition to recover the proper degrees of freedom
of $\bf\Omega$.
These problems are discussed in 
$\S$ \ref{sec:compVar1}. 
Next, we employ the complex variable form to illustrate the contraction 
procedure from the generic spin CS and PI to 
the corresponding CCS and PI ($\S$ \ref{sec:high-spin}). 
Besides we add another  complex variable form ($\S$ \ref{sec:compVar2}). 
\subsection{Complex variable form via Gaussian decomposition}
\label{sec:compVar1}
We can parametrize the spin CS, $\ket{\bf\Omega}$, 
by a pair of complex variables 
${\bf z} \equiv (z_{+}, z_{-})$ 
and its complex conjugate 
${\bf z^*} \equiv (z_{+}^{*}, z_{-}^{*})$ 
via the ``Gaussian decomposition'' of the operator 
${\hat R}({\bf\Omega})$, 
i.e.(\ref{eqn:compRot})-(\ref{eqn:EulerComp}): 
\begin{equation}
\ket{\bf\Omega} = \ket{\bf z} 
\equiv {\hat R}({\bf z}) \ket{\Psi_0} 
\equiv {\hat R}(z_{+}, z_3, z_{-}) \ket{\Psi_0}. 
\label{eqn:def-z-spinCS}
\end{equation} 
We put ${\hat R}({\bf z}) 
=  {\hat R}(z_{+}, z_3, z_{-})$ 
since from (\ref{eqn:EulerComp}) $z_3$ is a function of 
$\bf z$: 
\begin{equation}
\exp(- z_3 / 2) 
= i \, \{ \, z_{+}^{*} z_{-}^{*} \, 
/ \, [ \, \absq{z_{+}} (1 + \absq{z_{+}} )^2 \, 
] \, \}^{1/2} .
\label{eqn:z3}
\end{equation}
However, we know that the degrees of freedom of the CS, 
i.e. those of $\bf\Omega$, are three; 
And thus it is clear that the representation by ${\bf z}$ 
and ${\bf z^*}$ is still redundant. 
We may remedy the problem by reducing the degrees of freedom 
with the aid of a subsidiary condition: 
$\abs{z_{+}} = \abs{z_{-}}$ from (\ref{eqn:EulerComp}). 
\par
For the resolution of unity we have
\begin{equation}
\int \ket{\bf z} d \nu({\bf z}) 
\bra{\bf z} = {\bf 1},
\label{eqn:comp-resolution}
\end{equation}
where
\begin{equation}
d \nu({\bf z}) 
= \frac{2 s + 1}{2 \pi^2} 
\frac{\delta(\abs{z_{+}} - \abs{z_{-}})}
{\abs{z_{+}}\, (1 + \abs{z_{+}}^2)^2} 
\, d^2(z_{+}) d^2(z_{-}) 
\label{eqn:z-inv-mes}
\end{equation}
and
$
d^2(z_\ell) 
\equiv 
d ({\rm Re} z_\ell) d ({\rm Im} z_\ell)\ (\ell = +, -).
$
\par
The propagator reads: 
\begin{equation} 
K({\bf z}_f, t_{f}; {\bf z}_i, t_{i})
=\int  
\exp \{ ({i} / \hbar) S[{\bf z}(t)] \} \, 
{\cal D} [{\bf z}(t)],
\label{eqn:zPI}
\end{equation}
where
\begin{equation}
S[{\bf z}(t)]
\equiv \int_{t_i}^{t_f}  
\Bigl[ \ \bra{\bf z} {i} \hbar 
\frac{\partial}{\partial t} 
 \ket{\bf z} 
 - H({\bf z}, t) \ \Bigr] 
 \, d t 
 \equiv \int_{t_i}^{t_f}   
 L({\bf z}, {\bf \dot z},t) \, d t 
\label{eqn:action-z}
\end{equation}
with 
\begin{equation}
H({\bf z}, t) 
\equiv 
\bracket{{\bf z}}{\hat H}{{\bf z}}
\qquad 
{\rm and} 
\qquad
{\cal D}
[{\bf z}(t)] 
\equiv
\lim_{N \rightarrow \infty} 
\prod_{j=1}^{N} 
d \nu({\bf z}_{t_j}).
\label{eqn:H+paths-z}
\end{equation}
The explicit form of the Lagrangian yields
\begin{eqnarray}
L({\bf z}, {\dot {\bf z}}, t) 
& = & i \hbar \, 
\Bigl\{ \, 
\frac{1}{2 \absq{z_+}} 
\Bigl[ \, 
A_0(\set{c_m}) \Bigl( \, \frac{1 - \absq{z_+}}{1 + \absq{z_+}} \, 
( z_+^{*} {\dot z_+} - {\dot z}_+^{*} z_+) 
+ (z_{-}^{*} {\dot z_-} - {\dot z}_-^{*} z_- ) \, 
\Bigr) \, 
 \Bigr] 
\nonumber \\
& & \qquad \qquad 
+ A_3({\bf z}, {\dot {\bf z}}; \set{c_m}) \, 
\Bigr\} 
 - H({\bf z}, t),
\label{eqn:zLagpol}
\end{eqnarray}
where 
\begin{equation}
A_3({\bf z}, {\dot {\bf z}}; \set{c_m}) 
\equiv 
\frac{1}{\absq{z_+} (1 + \absq{z_+})} 
\sum_{m=-s+1}^{s} f(s, m) 
(  c_m c_{m-1}^{*}  z_+ {\dot z_+}^* z_- 
- c.c. ).
\label{eqn:deff-zA3}
\end{equation}
One can obtain the above relations
(\ref{eqn:comp-resolution})-(\ref{eqn:deff-zA3}), 
via (\ref{eqn:def-z-spinCS})-(\ref{eqn:z3}) and 
(\ref{eqn:compRot})-(\ref{eqn:EulerComp}), 
from the Euler angle parametrization 
forms in $\S$ \ref{sec:CS} - $\S$ \ref{sec:PI}; 
Or one may confirm them by calculating 
$\bra{\bf z} (\partial/ \partial t) \ket{\bf z}$ directly. 
The formulae for the conventional CS, 
$
\ket{z} = (1 + \absq{z}){}^{- 1 / 2} \exp(z {\hat S}_{+})\ket{\Psi_0}
$ with $\ket{\Psi_0} = \ket{- s}$, follow by 
putting $z_{-} = - z_{+}^{*}$ and neglecting or integrating out 
$\abs{z_{-}}$-variable and then replacing $z_{+}$ with $z$. 
%
\subsection{High spin limit: 
contraction to the canonical CS}
\label{sec:high-spin}
It is well-known that in the high spin limit, 
i.e. $s \rightarrow \infty$, the conventional spin CS with 
$\ket{\Psi_0} = \ket{s}$ or $\ket{- s}$ approaches to the the usual CCS 
\cite{Rad,Arec,Perb}. 
However, as we saw in $\S$ \ref{sec:CCSPI}, 
the CCS and CCSPI has been extended to an arbitrary 
FV case; And thus, as we put in  $\S$ \ref{sec:intro}, 
it is natural to ask whether there exits any spin CS and CSPI 
that tends to the CCS and CCSPI with an arbitrary FV. 
This has been one of the motivations mentioned in $\S$ \ref{sec:intro} 
to construct such general spin CS and CSPI as described 
in $\S$ \ref{sec:CS} - $\S$ \ref{sec:compVar1}. 
The answer is affirmative:
\begin{theorem}
The spin CS $\ket{\bf\Omega}$ and CSPI with a generic FV in $\S$
 \ref{sec:CS} - $\S$ \ref{sec:compVar1}, 
in the  high spin limit, tend to the CCS $\ket{\alpha}$ and CCSPI 
described in $\S$ \ref{sec:CCSPI}. 
\end{theorem}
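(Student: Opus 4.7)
The plan is to establish the contraction by (i) identifying a correspondence between the orthonormal bases (spin number states $\ket{s,m}$ and photon number states $\ket{n}$) together with the two algebras, (ii) rescaling the complex parameters $(z_+,z_-)$ appropriately in the spin CS of $\S$\ref{sec:compVar1}, and (iii) checking that every ingredient of the spin CSPI --- the state itself, the measure, the resolution of unity, and the Lagrangian --- contracts termwise to its canonical counterpart from $\S$\ref{sec:CCSPI}. The most natural starting point is the complex form (\ref{eqn:def-z-spinCS})--(\ref{eqn:deff-zA3}), since it is structurally the closest to (\ref{eqn:CCS1})--(\ref{eqn:LagCCSPI}).

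Concretely, I would introduce the index shift $n=m+s$ so that the lowest spin state $\ket{s,-s}$ plays the role of the vacuum $\ket{0}$, and match the fiducial coefficients by $c_n^{\rm CCS}:=c_{-s+n}^{\rm spin}$. The standard In\"on\"u--Wigner contraction rescales the generators as $\hat S_{\pm}/\sqrt{2s}\to\hat a^{\pm}$ and $\hat S_3+s\to\hat a^{+}\hat a$, which is consistent with $f(s,-s+n)=\sqrt{n(2s-n+1)}\to\sqrt{2s}\sqrt{n}$ for $n$ fixed as $s\to\infty$. Simultaneously I set $z_+ =\alpha/\sqrt{2s}$ and $z_- =\alpha^{*}/\sqrt{2s}$, so that the subsidiary condition $\abs{z_+}=\abs{z_-}$ is automatic, and read off $z_3$ from (\ref{eqn:z3}). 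Using the Gaussian decomposition (\ref{eqn:compRot})--(\ref{eqn:EulerComp}) for $\hat R(\mathbf z)$, the prefactor $(1+\absq{z_+})^{-s}$ tends to $\exp(-\absq{\alpha}/2)$ by the Stirling-type limit, the factor $\exp(z_+\hat S_+)$ becomes $\exp(\alpha\hat a^{+})$, and $\exp(z_-\hat S_-)$ becomes $\exp(-\alpha^{*}\hat a)$. Comparing with the normal-ordered form (\ref{eqn:CCS-D}) produces $\ket{\mathbf z}\to\ket{\alpha}=\hat D(\alpha)\ket{\Psi_0}$ termwise in the photon-number basis.

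Next I contract the measure and the action. Writing $\delta(\abs{z_+}-\abs{z_-})$ of (\ref{eqn:z-inv-mes}) in polar coordinates absorbs the redundant radial variable, while the Jacobian from $z_+=\alpha/\sqrt{2s}$, combined with $(1+\absq{z_+})^{-2}\to 1$ and $(2s+1)/(2\pi^2)$, collapses $d\nu(\mathbf z)$ to the Euclidean measure $\pi^{-1}d^2\alpha$ of (\ref{eqn:CCSresolution}); Theorem \ref{th:resUnit} thus contracts to (\ref{eqn:CCSresolution}). For the Lagrangian (\ref{eqn:zLagpol}), I expand in powers of $1/s$: the $A_0$-piece $\tfrac{i\hbar}{2\absq{z_+}}[\cdots]$ yields exactly $\tfrac{i\hbar}{2}(\alpha^{*}\dot\alpha-\dot\alpha^{*}\alpha)$ after cancellation of the $\sqrt{2s}$ factors, and the $A_3$-term (\ref{eqn:deff-zA3}) contracts to the $A$-term (\ref{eqn:defA-CCS}) once $f(s,m)/\sqrt{2s}\to\sqrt{n}$ is inserted. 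Finally, the operator contraction turns $H(\mathbf z,t)=\bracket{\mathbf z}{\hat H}{\mathbf z}$ into $H(\alpha^{*},\alpha,t)=\bracket{\alpha}{\hat H}{\alpha}$, so (\ref{eqn:zPI})--(\ref{eqn:H+paths-z}) go over to (\ref{eqn:propagator-CCSPI})--(\ref{eqn:paths-CCSPI}).

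The main obstacle is bookkeeping rather than anything structural: one must simultaneously control the subsidiary constraint $\abs{z_+}=\abs{z_-}$ (so that the delta function and rescaling Jacobian combine unambiguously into $\pi^{-1}d^2\alpha$), the different leading orders of the three Gaussian-decomposition factors (so that the surviving prefactor is precisely $\exp(-\absq{\alpha}/2)$), and the truncation-versus-limit of the sums $\sum_{m=-s}^{s}$ against $\sum_{n=0}^{\infty}$ (which requires the fiducial coefficients to have sufficient decay, as already assumed for $\ket{\Psi_0}$ in (\ref{eqn:CCS-FV})). Once these are handled, every term of $\S$\ref{sec:CS}--\ref{sec:compVar1} maps onto the corresponding term of $\S$\ref{sec:CCSPI}, completing the contraction.
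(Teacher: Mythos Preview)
Your overall strategy is the same as the paper's: use the Holstein--Primakoff contraction of the generators, rescale the complex parameters in the Gaussian decomposition, and check that the state, the measure, and the Lagrangian contract term by term. The index shift $n=m+s$, the limit $f(s,m)\to\sqrt{2s}\,\sqrt{n}$, and the handling of the $\delta$-function in $d\nu(\mathbf z)$ all mirror the paper's argument.

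There is, however, a concrete sign error that makes one of your steps fail as written. You set $z_-=\alpha^{*}/\sqrt{2s}$ and then assert that $\exp(z_-\hat S_-)\to\exp(-\alpha^{*}\hat a)$. But with $\hat S_-\to\sqrt{2s}\,\hat a$ your choice gives $z_-\hat S_-\to+\alpha^{*}\hat a$, so the last Gaussian factor contracts to $\exp(+\alpha^{*}\hat a)$, not $\exp(-\alpha^{*}\hat a)$, and you do not recover the normal-ordered displacement operator (\ref{eqn:CCS-D}). The correct rescaling is $z_-\to-z_+^{*}=-\alpha^{*}/\sqrt{2s}$, which is exactly the specialization noted at the end of $\S$\ref{sec:compVar1} for the conventional CS and is what the paper uses. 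This sign also feeds into the phase of $z_3$ via (\ref{eqn:z3}) and into the $\delta$-function bookkeeping for the angular variable of $z_-$ in the measure; with $z_-=+\alpha^{*}/\sqrt{2s}$ the phases in (\ref{eqn:z3}) and in the angular integration would not line up. Once you flip this sign, your contraction of $\hat R(\mathbf z)\to\hat D(\alpha)$, of $d\nu(\mathbf z)\to\pi^{-1}d^2\alpha$, and of the Lagrangian (\ref{eqn:zLagpol}) to (\ref{eqn:LagCCSPI}) goes through and coincides with the paper's proof.
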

%
\begin{proof}
We adapt the method of Radcliffe \cite{Rad} and Arecchi 
{\em etal.} \cite{Arec} for a generic FV case.
Following the high spin limit of the transformation 
{\it {\`a} la} Holstein-Primakoff 
 \cite{HolPri}, let us put 
\begin{equation}
{\hat S}_+ \rightarrow (2 s)^{1 /2} {\hat a}^+, 
\quad
{\hat S}_- \rightarrow (2 s)^{1 /2} {\hat a}, 
\quad
{\hat S}_3 \rightarrow -  s\, {\bf 1} 
+ {\hat a}^+ {\hat a}.
\label{eqn:high-spin-S}
\end{equation}
We also set
\begin{equation}
z_+ \rightarrow \alpha 
(2 s)^{- 1 /2}, \qquad
z_- \rightarrow - z_+^*, 
\label{eqn:high-spin-z}
\end{equation}
which, with (\ref{eqn:z3}), gives 
\begin{equation}
z_3 \rightarrow \absq{\alpha} / (2 s). 
\label{eqn:high-spin-z3}
\end{equation}
\par
Then the combination of (\ref{eqn:high-spin-S})-(\ref{eqn:high-spin-z3})
 with (\ref{eqn:compRot})-(\ref{eqn:EulerComp}) 
and (\ref{eqn:CCS-D}) produces 
\begin{equation}
{\hat R}({\bf \Omega}) 
= {\hat R}({\bf z})
\longrightarrow 
\exp(\alpha {\hat a}^+) 
\exp \Bigl(- ( 1 / 2) \absq{\alpha} \Bigr) 
\exp(- \alpha^* {\hat a}) 
= {\hat D}(\alpha).
\label{eqn:spin2canon-ops}
\end{equation}
Besides since from (\ref{eqn:high-spin-S})
\begin{equation}
{\hat a}^+ {\hat a} \ket{n} = n \ket{n},  
\qquad \bigl( 
n \equiv m + s, \ket{n} 
\equiv \lim_{s \rightarrow \infty} \ket{m}
\bigr),
\label{eqn:spin2canon-baseket}
\end{equation}
we obtain
\begin{equation}
\ket{\Psi_0} = \sum_{m = - s}^{s} c_m \ket{m}
\longrightarrow  
\sum_{n = 0}^{\infty} c_n \ket{n},
\label{eqn:spin2canon-FV}
\end{equation}
where the numbering of the coefficients has been shifted.  
From (\ref{eqn:spin2canon-ops}) and 
(\ref{eqn:spin2canon-FV}) 
we obtain that
\begin{equation}
\ket{\bf\Omega}  = \ket{\bf z} 
= {\hat R}({\bf z}) \ket{\Psi_0}
\longrightarrow 
{\hat D}(\alpha) \cdot 
\sum_{n=0}^{\infty} c_{n} \ket{n} 
= \ket{\alpha},
\label{eqn:spin2canon-CS}
\end{equation}
which is precisely (\ref{eqn:CCS1}) in $\S$ \ref{sec:gCCS}: 
the definition of the CCS with a generic FV. 
\par
Next, with the aid of (\ref{eqn:z-inv-mes}), 
(\ref{eqn:high-spin-z}) and (\ref{eqn:spin2canon-CS}), 
we find that the left side hand of (\ref{eqn:comp-resolution}) becomes 
\begin{eqnarray}
\int\ket{\bf z} d \nu({\bf z}) 
\bra{\bf z} 
& = &  \frac{2 s + 1}{2 \pi^2} 
\int \ket{\bf z} \bra{\bf z} \cdot 
\frac{\delta(\abs{z_{+}} - \abs{z_{-}})} 
{\abs{z_{+}}\, (1 + \abs{z_{+}}^2)^2} 
\cdot 
\abs{z_{+}} \,  \abs{z_{-}} 
\nonumber \\
& & \qquad \qquad \times \, 
 d(\abs{z_{+}}) d({\rm arg} \, z_{+}) \, 
d(\abs{z_{-}}) d({\rm arg} \, z_{-})
\nonumber \\
& \rightarrow & 
\frac{2 s + 1}{2 \pi^2} 
\int \ket{\bf z} \bra{\bf z} 
\cdot 
\frac{\abs{z_{+}}}{(1 + \absq{z_{+}})^2} 
\cdot 2 \pi \cdot \delta \bigl( {\rm arg}\, z_{-} 
- ({\rm arg}\, z_{+} - \pi) \bigr) \,
\nonumber \\
& & \qquad \qquad \times \, 
 d(\abs{z_{+}}) d({\rm arg} \, z_{+}) \, 
d({\rm arg} \, z_{-})
\nonumber \\
& = & \frac{2 s + 1}{\pi} 
 \int \ket{z_+} \bra{z_+} 
\cdot 
\frac{\abs{z_{+}}}{(1 + \absq{z_{+}})^2} 
\, 
d(\abs{z_{+}}) d({\rm arg} \, z_{+})
\nonumber \\
& \rightarrow & 
\frac{1}{\pi} \frac{2 s + 1}{2 s} 
\int  \ket{\alpha} \bra{\alpha} 
\cdot 
\frac{\abs{\alpha}}
{[1 + \bigl(\absq{\alpha} / (2 s) \bigr)]^2}
\, 
d(\abs{\alpha}) d({\rm arg} \, \alpha) 
\nonumber \\
& \rightarrow & 
\frac{1}{\pi}
\int 
\ket{\alpha} d^2 \alpha \bra{\alpha}, 
\label{eqn:ResUni-spinCS2CCS}
\end{eqnarray}
which shows that the resolution of unity for 
the spin CS, (\ref{eqn:polresolution}) or (\ref{eqn:comp-resolution}), 
tends to that for CCS (\ref{eqn:CCSresolution}). 
The arguments in the $\delta$-function in (\ref{eqn:ResUni-spinCS2CCS}) 
should be interpreted as ``modulo $2\pi$''. 
\par
Now that we have both Eqs. (\ref{eqn:spin2canon-CS}) 
and (\ref{eqn:ResUni-spinCS2CCS}), 
we see that all the results of the spin CS and CSPI here 
approach to those in $\S$ \ref{sec:CCSPI}, which completes the proof. 
\end{proof}
\par
We may also see the results from the complex variable PI expression 
(\ref{eqn:zPI})-(\ref{eqn:deff-zA3}) 
with the help of (\ref{eqn:high-spin-z}) 
and (\ref{eqn:high-spin-z3}). 
To this end, notice that we have in $s \rightarrow \infty$ limit 
\begin{equation}
A_0(\set{c_m}) 
= \sum_{m = - s}^{m = s} m \absq{c_m}
= \sum_{m = - s }^{m = s} (n - s) \absq{c_m}
\longrightarrow 
- s 
\end{equation}
and 
\begin{equation}
f(s, m) \longrightarrow n^{1 / 2} (2 s)^{1 / 2}\ 
.
\end{equation}
Then we find that the Lagrangian (\ref{eqn:zLagpol}), 
which is equivalent to (\ref{eqn:Lagpol}), 
for the generic spin CSPI tends to (\ref{eqn:LagCCSPI}) for the CCSPI. 
And the results in $\S$ \ref{sec:CS} - $\S$ \ref{sec:compVar1} 
are converted to those in $\S$ \ref{sec:CCSPI}. 
Especially, we see that the $A_3$-term in (\ref{eqn:deffA3}) 
and (\ref{eqn:deff-zA3}) corresponds to the $A$-term in
(\ref{eqn:defA-CCS}). 
\par
The $A_3$-term is not represented as a total derivative; 
And thus the $A_1$- and $A_4$-terms in the $A_3$-term take part in variational equations (\ref{eqn:vareq-pol}) for the spin CS. It is merely in the high spin limit that the $A_3$-term, 
approaching to the $A$-term, becomes a total derivative 
and its effect disappears in the 
variational equations. Revisit $\S$ \ref{sec:CCSCE} and 
$\S$ \ref{sec:sc-spinCSPI} for the point. 
\subsection{Another complex variable form}
\label{sec:compVar2}
We have another complex variable representation 
of the CS \cite{Ino}.
To this end we write ${\hat R}^{(1/2)}(\bf\Omega)$ 
in (\ref{eqn:s1/2}), using a new pair of complex variables 
${\bf a} = (a_1, a_2)$, in the form of 
\begin{equation}
{\hat R}^{(1/2)}({\bf\Omega}) 
= {\hat R}^{(1/2)}({\bf a})
= \left(
\begin{array}{lr}
a_1 
& - a_2^{*}
\\
a_2
& a_1^{*}
\end{array}
\right)
\qquad 
{\rm with} 
\qquad 
\absq{a_1} + \absq{a_2} = 1, 
\label{eqn:complex2-a}
\end{equation}
which is often used for the SU(2) group. 
We see from (\ref{eqn:s1/2}) and (\ref{eqn:complex2-a})
\begin{equation}
a_1 = \cos(\theta / 2) \exp[- i (\phi + \psi) / 2], 
\quad
a_2 = \sin(\theta / 2) \exp[ i (\phi - \psi) / 2].
\label{eqn:complex2-Euler}
\end{equation}
\par
The spin CS, in this case, is specified by 
\begin{equation}
\ket{\bf a}
= {\hat R}({\bf a}) \ket{\Psi_0}
\equiv 
{\hat R}({\bf\Omega})  \ket{\Psi_0},
\label{eqn:def-a-spinCS}
\end{equation}
where 
$\bf a$ is related to $\bf \Omega$ via 
(\ref{eqn:complex2-Euler}). 
\par
The resolution of unity becomes 
\begin{equation}
\int \ket{\bf a} d \lambda({\bf a}) 
\bra{\bf a} = {\bf 1},
\label{eqn:a-resolution}
\end{equation}
where
\begin{equation}
d \lambda({\bf a}) 
= \frac{4(2 s + 1)}{\pi^2} 
\, 
\delta({\bf a}^2 - 1) d^2 {\bf a} 
\quad 
{\rm and} 
\quad 
d^2 {\bf a} 
\equiv 
d^2 a_1 d^2 a_2.
\label{eqn:a-inv-mes}
\end{equation}
with 
$
d^2(a_\ell) 
\equiv 
d ({\rm Re}\  a_\ell) \, d ({\rm Im} \ a_\ell)\ 
(\ell = 1, 2).
$ 
The $\delta$-function leaves the degrees of freedom 
being three as (\ref{eqn:z-inv-mes}). 
\par
The propagator reads:
\begin{equation} 
K({\bf a}_f, t_{f}; {\bf a}_i, t_{i})
=\int  
\exp \{ ({i} / \hbar) S[{\bf a}(t)] \} \, 
{\cal D} [{\bf a}(t)],
\label{eqn:aPI}
\end{equation}
where
\begin{equation}
S[{\bf a}(t)]
\equiv \int_{t_i}^{t_f}   
\Bigl[ \ \bra{\bf a} {i} \hbar 
\frac{\partial}{\partial t} 
 \ket{\bf a} 
 - H({\bf a}, t) \ \Bigr] \, d t 
 \equiv \int_{t_i}^{t_f}   
 L({\bf a}, {\bf \dot a},t) \, d t 
\label{eqn:action-a}
\end{equation}
with 
\begin{equation}
H({\bf a}, t) 
\equiv 
\bracket{{\bf a}}{\hat H}{{\bf a}}
\qquad 
{\rm and} 
\qquad
{\cal D}
[{\bf a}(t)] 
\equiv
\lim_{N \rightarrow \infty} 
\prod_{j=1}^{N} 
d \lambda({\bf a}_{t_j}).
\label{eqn:H+paths-a}
\end{equation}
The explicit form of the Lagrangian yields
\begin{eqnarray}
L({\bf a}, {\dot {\bf a}}, t) 
& = & i \hbar 
\Bigl[  
A_0(\set{c_m}) 
\Bigl( 
(a_1^{*} {\dot a}_1 - {\dot a}_1^{*} a_1) 
+ (a_2^{*} {\dot a}_2 - {\dot a}_2^{*} a_2) 
\Bigr) 
\nonumber \\
& & \qquad 
+ A_3({\bf a}, {\dot {\bf a}}; \set{c_m}) 
\Bigl] - H({\bf a}, t),
\label{eqn:Lag-a} 
\end{eqnarray}
where 
\begin{equation}
A_3({\bf a}, {\dot {\bf a}}; \set{c_m}) 
\equiv 
\sum_{m = - s + 1}^{s} f(s, m) 
[ c_m c_{m - 1}^{*} (a_1 {\dot a}_2 - {\dot a}_1 a_2)
- {\rm c.c.}].
\label{eqn:deff-aA3}
\end{equation}
\par
We may also put the results in a real variable form using ${\bf x}
\equiv (x_1, \cdots, x_4)$ via (\ref{eqn:complex2-a}) 
and (\ref{eqn:real-R}). 
In the case the restriction 
${\bf x}^2 = 1$ keeps the degree of freedom of the CS being three. 
\section{Summary and Prospects}
\label{sec:summary} 
We have investigated a natural extension 
of the spin or SU(2)CS and their PI forms by using arbitrary FV, 
which turns out to be performed successfully. 
\par
In the present paper we have worked on the basic formulation. 
The physical applications, 
in relation to fictitious monopoles and geometric phases, 
will be treated in subsequent papers separately. 
We will discuss criteria in choosing FV for real Lagrangians. 
The problem has a close link to that of the semiclassical 
versus full quantum evolutions of CS and FV. 
It was Stone \cite{Stone} who first raised the problem 
commenting on the previous version of our article 
\cite{spinPIGP}. 
He pointed out that an arbitrary FV is not always 
realized and that there may be restrictions on FV  
so that quantum evolutions are consistent with the 
semiclassical ones. 
The formal CSPI themselves do not give answers to it; 
And thus, one may ascribe the fault to the formal CSPI. 
We have resolved, in the present article, the mysteries
 posed in \cite{Stone} to some extent by proving 
the process from discrete CSPI to the continuous ones 
as described in $\S$ \ref{sec:d2cPI}. 
The fact that the spin CSPI in 
$\S$ \ref{sec:PI} - $\S$  \ref{sec:compVar1} 
certainly contract to the CCSPI in  $\S$ \ref{sec:CCSPI} strengthens the validity 
of the formulation.  
We will clarify the riddle more deeply the next time around. 
However, the whole problem seems to have rather 
profound nature and we will still need much further investigations.   
\par
Next, from a broader viewpoint 
let us put the prospects of the future below.
First, conventional CCS and spin CS 
have been playing the roles of macroscopic wave functions 
in vast fields from lasers, superradiance, superfluidity 
and superconductivitiy to nuclear and particle physics 
\cite{KlaSk}. 
CS have such potential. 
And thus we may expect that by choosing appropriate sets of 
$\set{c_m}$ the CS evolving from arbitrary FV 
will serve as approximate states or trial wave functions 
for the collective motions, having higher energies 
in various macroscopic or mesoscopic quantum phenomena 
such as spin vortices \cite{KraYab} and domain walls, 
which may not be treated by the former. 
We hope that numerous applications of the CS and CSPI 
will be found in the near future. 
Second, from the viewpoint of mathematical physics 
as well as physical applications, 
it is desirable that the present CS and CSPI formalism 
is extended to wider classes. 
The generalization to the SU(1, 1) CS case, 
 which is closely related to squeezed states 
in lightwave communications and quantum detections 
 \cite{SQ,KMb}, 
is one of the highly probable candidates. 
We also have another candidate, i.e. the SU(3)CS case. 
Remembering that the SU(2)CS with a general FV here 
extends the SU(2) BMS${}^2$ Lagragngian and 
gives a clear insight into the topological terms, 
the SU(3)CS case may also shed a new light 
on the original SU(3) Wess-Zumino term 
\cite{WZ,Wit}. 
Finally, as we put in $\S$ \ref{sec:intro}, we may regard CS 
with arbitrary FV as quantum states without classical analogues. 
We have already known some of such states 
\cite{DNS,AAM-C}. 
It is true that CS with the conventional FV are closest to classical states 
and have useful properties 
\cite{Perc}. 
However, since ``the physical world is quantum mechanical'' \cite{Fey}, 
it seems definitely right to search boldly new quantum 
states whether their classical counterparts exist or not. 
\section*{Acknowledgments}
The author is grateful to Prof. M. Stone for sending the draft of \cite{Stone}. 
For other acknowledgments, 
see those in the first version of the present article 
 \cite{spinCSPI3a-v1}.
\appendix
\setcounter{equation}{0}
\renewcommand{\theequation}{%
                  \Alph{section}.\arabic{equation}}
\renewcommand{\thesection}{%
                  {\bf Appendix} \Alph{section}}
\section{Some Formulae for Rotation Matrices}
\label{sec:Rots}
Some basic formulae on the properties of the rotation matrices 
are enumerated 
\cite{Mess,BS,VMK,Miller}. 
We employ them in $\S$ \ref{sec:CS} - $\S$ \ref{sec:c-parametCS}. 
We mainly follow the notation and convention of Messiah \cite{Mess}.
\\

\noindent
 (i) {\em Matrix elements}\\
A rotation with Euler angles ${\bf \Omega} \equiv (\phi, \theta, \psi)$ 
of a spin-$s$ particle is specified by an operator 
$
{\hat R}({\bf \Omega}) 
= \exp(-i \phi {\hat S_3}) \exp(-i \theta {\hat S_2}) 
\exp(-i \psi {\hat S_3})
$; It has a $(2 s + 1) \times (2 s + 1)$ matrix representation 
whose $(m, m')$-entry is 
\begin{equation}
{\sf R}_{m{}m'}^{(s)}({\bf \Omega}) 
\equiv \bra{m} {\hat R}({\bf \Omega}) \ket{m'} 
= \exp(-i \phi m) \, {\sf r}_{m{}m'}^{(s)}(\theta) \, 
\exp(-i \psi m').
\label{eqn:mat-R}
\end{equation}
Here ${\sf r}_{m{}m'}^{(s)}(\theta) \equiv \bra{m} 
\exp(-i \theta {\hat S_2}) \ket{m'}$ 
is determined by the formula due to Majorana 
\cite{Maj} and to Wigner \cite{Mess}:
\begin{equation}
{\sf r}_{m{}m'}^{(s)}(\theta) 
= \sum_{t} 
N(s, m, m'; t) \cdot [\cos(\theta / 2)]^{2s + m - m' - 2t} \cdot 
[\sin(\theta / 2)]^{2 t - m + m'}
\label{eqn:mat-r}
\end{equation}
with
\begin{equation}
N(s, m, m'; t) \equiv (- 1)^{t} 
\frac {[\, (s + m)! \, (s - m)! \, (s + m')! \, (s - m')! \, ]^{1/2}}
{(s+ m - t)! \, (s - m'- t)! \, t! \, (t - m + m')!},
\label{eqn:N(s, m, m'; t)}
\end{equation}
where the sum runs over any integer $t$ by which all the factorials in
(\ref{eqn:N(s, m, m'; t)}) make sense. 
In particular, if $s=\frac12$, 
${\sf r}_{m{}m'}^{(s)}$ is extremely simple to give: 
\begin{equation}
{\hat R}^{(1/2)}({\bf \Omega}) 
= \left(
\begin{array}{lc}
\cos( \frac12 \theta ) 
\exp[- \frac12 i (\phi + \psi)] 
& -\sin( \frac12 \theta ) 
\exp[- \frac12  i (\phi - \psi)]
\\
\sin( \frac12 \theta ) \exp[ \frac12 i (\phi - \psi)] 
& \cos( \frac12 \theta ) \exp[ \frac12 i (\phi + \psi)] 
\end{array}
\right).
\label{eqn:s1/2}
\end{equation}
Most of the following relations, being independent of $s$, 
can be readily verified by the use of (\ref{eqn:s1/2}).  
For a higher spin $s$ one can find explicit expressions of 
${\sf r}_{m{}m'}^{(s)}$  in \cite{BS} 
and \cite{VMK}.
\\
(ii) {\em Group manifold}\\
Introducing the real variables ${\bf x} \equiv (x_1, \cdots, x_4)$
via 
\begin{equation}
{\hat R}^{(1/2)}({\bf \Omega}) 
= \left(
\begin{array}{lc}
x_1 + i\,  x_2 
& - x_3 +i\,  x_4
\\
x_3 +i\,  x_4
& x_1 - i\,  x_2
\end{array}
\right)
\equiv {\hat R}^{(1/2)}({\bf x}),
\label{eqn:real-R}
\end{equation}
we see that ${\bf x}^2 = x_1^2 + x_2^2 + x_3^2 + x_4^2 = 1$ 
results; 
Thus the $SO(3)$ manifold is isomorphic to a 3-sphere $S^3$. 
\\
(iii) {\em Gaussian decomposition} \cite{Arec,Perb,Perc,Gil}\\
The rotation matrix ${\hat R}({\bf\Omega})$ can be 
put into the normal or anti-normal ordering form 
in which ${\hat R}$ is 
specified by a set of complex variables:
\begin{eqnarray}
{\hat R}({\bf \Omega}) 
= {\hat R}(z_+, z_3, z_-)
& \equiv & \exp(z_+ {\hat S_+}) \exp(z_3 {\hat S_3}) 
\exp(z_- {\hat S_-})
\nonumber \\
& = & \exp(z_- {\hat S_-}) \exp(- z_3 {\hat S_3}) 
\exp(z_+ {\hat S_+}) 
\label{eqn:compRot}
\end{eqnarray}
The relation between the Euler angles and the complex parameters 
is given by
\begin{equation}
\left\{
\begin{array}{l}
z_+ = - \tan ( \frac12 \theta ) \exp(- i \phi) 
 \\
z_3 = - 2 \ln \{ \cos( \frac12 \theta ) 
\exp[ \frac12 i (\phi + \psi)] \} \\
z_- = \tan ( \frac12 \theta ) \exp(- i \psi).
\end{array}
\right.
\label{eqn:EulerComp}
\end{equation} 
(iv) {\em Combinations with $\bf S$} \cite{Mess}
\begin{equation}
\left\{
\begin{array}{l}
{\hat R^{+}}({\bf\Omega}) {\hat S}_3 {\hat R}({\bf\Omega}) 
 = \cos\theta {\hat S}_3 
 - \frac12 \sin\theta 
 [ \exp(i \psi) {\hat S}_{+} + \exp(-i \psi) {\hat S}_{-} ]
 \\
{\hat R}^{+}({\bf\Omega}) {\hat S}_{\pm} 
{\hat R}({\bf\Omega})  =  \exp(\pm i \phi) 
\{ 
\sin\theta  {\hat S}_3 
+ \frac12 
[
(\cos\theta \pm 1) \exp(i \psi) 
{\hat S}_{+}  \\
 \qquad \qquad \qquad + (\cos\theta \mp 1) \exp(- i \psi) 
{\hat S}_{-}
]
\}. 
\end{array}
\right.
\label{eqn:RS}
\end{equation}
(v) {\em Inverse}\\
$\hat R({\bf\Omega})$ is unitary and its inverse matrix is given by 
\begin{equation}
{\hat R}^{+}(\phi, \theta, \psi) 
= {\hat R}^{-1}(\phi, \theta, \psi)
= {\hat R}(-\psi, -\theta, -\phi). 
\label{eqn:rotinverse}
\end{equation} 
(vi) {\em Orthogonality relation}\\
The relation stems from integrating the products 
of the unitary irreducible representations of a compact 
group over the element of the group; 
thus it is a generic relation 
for the representations. In the present case it reads 
\cite{BS,Miller}:
\begin{equation}
\int_0^{2\pi} \int_0^{\pi} \int_0^{2\pi}\ 
\left( {\sf R}_{m{}m'}^{(s)}({\bf \Omega}) \right)^{*} 
{\sf R}_{n{}n'}^{(s')}({\bf \Omega}) 
\sin\theta\ d \phi d \theta d \psi 
= \frac{8\pi^{2}}{2s+1}\ 
\delta_{m, n} \delta_{m', n'} \delta_{s, s'}.
\label{eqn:ortho}
\end{equation}
(vii) {\em Two successive rotations}\\
Two successive rotations specified by Euler angles  
${\bf\Omega}_\ell \equiv (\phi_\ell, \theta_\ell, \psi_\ell) 
\  (\ell = 1, 2)$ 
produce 
$
{\hat R}({\tilde {\bf\Omega}}) 
\equiv 
{\hat R}({\bf\Omega}_2) {\hat R}({\bf\Omega}_1), 
$ 
where ${\tilde {\bf\Omega}} \equiv ({\tilde \phi}, {\tilde \theta}, 
{\tilde \psi})$ obeys 
\begin{equation}
\left\{
\begin{array}{l}
\cos{\tilde \theta} = \cos\theta_1 \cos\theta_2 
- \sin\theta_1 \sin\theta_2 \cos(\phi_1 + \psi_2) 
 \\
\sin{\tilde \theta} \exp(i {\tilde \phi}) 
\\
=  \exp(i \phi_2) \, \Bigl[ \cos\theta_1 \sin\theta_2 
+ \sin\theta_1 \cos\theta_2 \cos(\phi_1 + \psi_2) 
\\ \qquad \qquad  \qquad \quad
+ i \sin\theta_1 \sin(\phi_1 + \psi_2) \Bigr] 
\\
\cos( \frac12 {\tilde \theta} ) 
\exp[ \frac12 i (\tilde \phi + \tilde \psi)] 
\\
 =  \exp[ \frac12 i (\phi_2 + \psi_1) ] 
 \Bigl\{ 
\cos( \frac12 \theta_1 ) \cos( \frac12 \theta_2) 
\exp[ \frac12 i (\phi_1 + \psi_2) ] 
\\
 \qquad 
\qquad \qquad 
- \sin( \frac12 \theta_1 ) 
\sin( \frac12 \theta_2 ) 
\exp[ -\frac12 i (\phi_1 + \psi_2) ] \Bigr\}.
\end{array}
\right.
\label{eqn:tworots}
\end{equation}
(viii) {\em Three successive rotations}\\
In a similar manner to that in (vii), the Euler angles made of 
three successive rotations can be calculated. 
Assuming that the rotations are specified by Euler angles 
$(\phi_1, \theta_1, \psi_1)$, $(\phi, \theta, \psi)$ 
and $(\phi_2, \theta_2, \psi_2)$, which happen in this order, 
the composed rotation yields 
$
{\hat R}({\bf\Omega}') 
\equiv 
{\hat R}({\bf\Omega}_2)
{\hat R}({\bf\Omega})
{\hat R}({\bf\Omega}_1), 
$
where ${\bf\Omega}' \equiv (\phi', \theta', \psi')$ obeys  
\begin{eqnarray}
\cos{\theta'} 
& = & [ \cos\theta_1 \cos\theta 
- \sin\theta_1 \sin\theta \cos(\phi_1 + \psi) 
]\cos\theta_2 
\nonumber \\
& & \quad 
+ \{ \sin\theta_1 [ \sin(\phi_1 + \psi) 
\sin(\phi + \psi_2) 
- \cos(\phi_1 + \psi) \cos\theta \cos(\phi + \psi_2) ] 
\nonumber \\ 
& & \qquad \quad 
- \cos\theta_1 \sin\theta \cos(\phi + \psi_2) 
\} \sin\theta_2
\label{eqn:trirots}
\end{eqnarray}
and two additional equations that we omit here; 
They describe $\sin\theta' \exp(i \phi')$ 
and $\cos(\frac12 \theta')$ \\
$\times \exp[\frac12 i (\phi' + \psi')]$ in terms of 
${\bf\Omega}_1$, ${\bf\Omega}$ 
and ${\bf\Omega}_2$ as in (\ref{eqn:tworots}).

\end{document}